\definecolor{tabcolor}{rgb}{1,0,0}  
\newtheorem{theorem}{Theorem}
\newtheorem{remark}{Remark}
\newtheorem{proposition}{Proposition}
\newcommand\figcaption{\def\@captype{figure}\caption}
\begin{document}

%
\title{Optimal Power Management for Failure Mode of MVDC Microgrids
in All-Electric Ships }
%

\author{~Qimin~Xu,~Bo~Yang, \IEEEmembership{Senior Member, IEEE},~Qiaoni~Han,~Yazhou~Yuan, \\
~Cailian~Chen,
\IEEEmembership{Member, IEEE},~Xinping~Guan, \IEEEmembership{Fellow, IEEE} \\

\thanks{This work was supported by National Key Research and Development Program of China (2016YFB090190), National Natural Science Foundation of China (61573245, 61521063, 61633017, 61622307, 61731012, 61803328 and 61803218). This work was also partially supported by SMC Outstanding Faculty Award of Shanghai Jiao Tong University.}
\thanks{Q.~Xu,~B.~Yang,~Q.~Han,~C.~Chen,~X.~Guan are with the Department of Automation, Shanghai Jiao Tong University, Shanghai 200240, China, Collaborative Innovation Center for Advanced Ship and Deep-Sea Exploration, Shanghai 200240, China, and also with the Key Laboratory of System Control and Information
Processing, Ministry of Education of China,
Shanghai 200240, China (e-mail: qiminxu@sjtu.edu.cn; bo.yang@sjtu.edu.cn; qiaoni@sjtu.edu.cn;
cailianchen@sjtu.edu.cn; xpguan@sjtu.edu.cn).}
\thanks{Y.~Yuan is with the Institute of Electrical Engineering,
Yanshan University, Qinhuangdao, 066004, China (e-mail: yzyuan@ysu.edu.cn).}
}


%


\maketitle

\begin{abstract}
Optimal power management of shipboard power system for failure mode (OPMSF) is a significant and challenging problem considering the safety of system and person. Many existing works focused on the transient-time recovery without consideration of the operating cost and the voyage plan.
In this paper, the OPMSF problem is formulated considering the mid-time scheduling and the faults at bus and generator. Two-side adjustment methods including the load shedding and the reconfiguration are coordinated for reducing the fault effects.
To address the formulated non-convex problem, the travel equality constraint and fractional energy efficiency operation indicator (EEOI) limitation are transformed into the convex forms. Then, considering the infeasibility scenario affected by faults, a further relaxation is adopted to formulate a new problem with feasibility guaranteed.
Furthermore, a sufficient condition is derived to ensure that the new problem has the same optimal solution as the original one.
Because of the mixed-integer nonlinear feature, an optimal management algorithm based on Benders decomposition (BD) is developed to solve the new one. Due to the slow convergence caused by the time-coupled constraints, a low-complexity near-optimal algorithm based on BD (LNBD) is proposed. The results verify the effectivity of the proposed methods and algorithms.
\end{abstract}

\begin{IEEEkeywords}
Shipboard power system, failure mode, load shedding, convex relaxation, Benders decomposition
\end{IEEEkeywords}

%
\IEEEpeerreviewmaketitle


\maketitle

%

\section{Introduction}
Shipboard power system (SPS) is self-powered by distributed electrical power generators operating collectively, which can be considered as an isolated microgrid. 
From the perspective of electrical design of all-electric ship (AES), there are three architectures of SPS to date, i.e., medium voltage DC (MVDC), medium voltage AC (MVAC), and higher frequency AC (HFAC). As the ever-increasing DC-based loads, it is likely that AES will feature a medium voltage primary distribution system in the future \cite{hebner2015technical}. 
Due to the intensive coupling and finite inertia feature, the consequences of a minor fault in a system component can be catastrophic.
Since the AES mostly targets at military applications, it is highly susceptible to be damaged.
Distinguished from terrestrial systems, the system failure of SPS is more disastrous due to the personnel safety on the shipboard.
Thus, optimal power management of SPS for failure mode (OPMSF) is essential to guarantee the system safety, while meeting the load demand.

\subsection{Motivation}
The time scale of the OPMSF problem includes transient-time, short-time, and mid-time.
Most of existing works about OPMSF focused on the recovery at a transient-time scale \cite{Bose2012Analysis,Srivastava2007Probability,das2013dynamic,jiang2012novel,Seenumani2012Real,Feng2015Multi} or a short-time scale \cite{seenumani2011reference}. Their objective is to improve the restored power of loads and guarantee the power balance. 
However, due to the damaged system structure by faults, the power supply-demand relationship is changed. From the results in \cite{Bose2012Analysis}, the delivered power falls to 75.4\% of the total power in 10\% of all possible 2-fault cases, and dips to 23.4\% in 10\% of all possible 3-fault cases. The imbalance of power supply and demand is severe in these cases. Consequently, the original optimal operating scheduling is not suitable for the remaining voyage, the operating cost and the risk of system safety are increased.
Hence, the mid-time scheduling OPMSF problem is essential and meaningful.

The adjustment methods for mid-time scheduling can be classified into two categories, i.e., the supply side and the demand side.
On the supply side, the generators that are the primary generation equipment cannot operate at original optimal state caused by faults.
Additionally, energy storage module compensation (ESMC) is a potential solution for improving the energy efficiency of SPS \cite{inventions2017Effect}.
Hence, the generation scheduling including ESMC in the remaining voyage has to be reorganized according to faults.
On the demand side, the load adjustment also plays a key role in optimal power management. The load of SPS includes the propulsion modules (PMs) and service loads. 
The propulsion power adjustment (PPA) can achieve energy efficiency improvement \cite{kanellos2014optimal}. However, once the capacity of generators is not enough to cover the load demand in the corresponding zones after faults happening, load shedding of service loads and reconfiguration of power network are required for guaranteeing the system safety and completing the voyage.
Hence, it is necessary to adopt load shedding and reconfiguration.
To sum up, two-side adjustment methods including load shedding and reconfiguration are meaningful and have significant effects on the OPMSF problem.



\subsection{Literature Review}
Many efforts have been devoted to study the fault management of SPS \cite{Bose2012Analysis,das2013dynamic,amba2009genetic,Nelson2015Automatic,Mitra2011Implementation,jiang2012novel,mashayekh2015integrated,Srivastava2007Probability,Seenumani2012Real,Feng2015Multi,kanellos2014optimal,seenumani2011reference,Kanellos2016Smart,Kanellos2017cost,Kanellos2014Optimaldemand,shang2016economic}. 
In the transient-time scale, their main objective is to maximize the weighted sum of restored loads \cite{Bose2012Analysis,das2013dynamic,jiang2012novel,Srivastava2007Probability,Seenumani2012Real,Feng2015Multi,amba2009genetic,Nelson2015Automatic,Mitra2011Implementation}. Additionally, there are other considerations including 
obtaining the correct order of switching\cite{das2013dynamic},  
probability-based prediction of fault effects\cite{Srivastava2007Probability}, 
minimizing the number or cost of switching actions\cite{Bose2012Analysis,jiang2012novel}, 
real-time management\cite{Seenumani2012Real,Feng2015Multi}, etc.
In the short-time scale, 
the authors in \cite{seenumani2011reference} developed a reference governor-based control approach to support the non-critical loads as much as possible while maximizing the battery usage.
However, they focused on the recovery of power supply for loads, without consideration of post-fault management including the voyage plan and the operating cost in a mid-time scale.

The optimal power management of SPS (OPMS) problem including the voyage plan and the operating cost in the mid-time scale has been studied in \cite{kanellos2014optimal,Kanellos2016Smart,Kanellos2017cost,Kanellos2014Optimaldemand,shang2016economic}. Particle swarm optimization (PSO) based algorithms are developed to solve the OPMS problem in \cite{kanellos2014optimal,Kanellos2016Smart,Kanellos2017cost}. In the three works, PPA and ESMC are considered for improving energy efficiency.
In \cite{Kanellos2014Optimaldemand}, dynamic programming (DP) algorithm is adopted to solve the same problem. In \cite{shang2016economic}, the authors formulated a multi-objective problem that considers the reduction of fuel consumption and energy efficiency operation indicator (EEOI) limitation together.
In the above works, they do not consider the failure mode.
To the best of our knowledge, there is no work focused on the OPMSF problem in a mid-time scale. Meanwhile, due to the damaged power network, and the usage of reconfiguration and load shedding, the algorithms in above works cannot be directly adopted to solve the OPMSF problem.

\subsection{Challenges}
The main target of this work is to solve the OPMSF problem in a mid-time scale. There are three main challenges to solve the problem.
Firstly, how to coordinate load shedding with the other adjustment methods to meet the load demand in the first place, unless faults affect the equipment safety and the voyage plan.
Secondly, considering the non-convex feature of the proposed problem and the infeasibility scenario affected by faults, it is hard to obtain the optimal solution and even a feasible solution.
Thirdly, the variables in the travel and ESM constraints of this mixed-integer nonlinear programming (MINLP) problem are coupled in time. 
Hence the computational complexity would be exponentially increasing with the number of operation time.


\subsection{Contributions} 
In this paper, the proposed OPMSF problem in MVDC SPS is to minimize the total operating cost including the cost of generation and energy storage while guaranteeing the system safety, the GHG emission limitation, and the voyage plan.
The contributions of this paper are summarized below.
\begin{itemize}
\item {
The OPMSF problem is reformulated based on the analysis of faults.
Load shedding and reconfiguration are added as auxiliary adjustment methods considering the fault effects. 
Different from existing works, to guarantee that load shedding only works when generator scheduling (GS) and ESMC cannot solve the OPMSF problem, 
a coordination mechanism is developed by adding a penalty term of load shedding in the objective and a sufficient condition of the penalty parameter is derived. 
}
\item { { Non-convex travel constraint and fractional EEOI limitation are transformed into convex forms to obtain a better tractable problem.
Then, considering the infeasible scenarios caused by faults, a feasibility-guarantee mechanism is established by introducing a slack distance variable and adding its penalty term in the objective.
Lastly, a sufficient condition of that penalty parameter is derived to guarantee that if the original problem is feasible, the new one has the same optimal solution; if not, the maximum travel distance can be further obtained to assist rescue mission. }}
\item {To address the reformulated problem, {an {optimal management algorithm} based on Benders decomposition (BD) is designed to split it into two more tractable problems (subproblem and master problem)}. 
Due to the slow convergence caused by the time-coupled constraints,
{a low-complexity near-optimal algorithm based on BD (LNBD) is proposed by decomposing the time-coupled constraints with suboptimal power allocation of ESMs and propulsion modules in the subproblem. A complexity analysis is given to compare the performance of two algorithms.
}
}
\end{itemize}


The paper is organized as follows: in Section \ref{sec:model}, the main modules are introduced, and the OPMS problem is formulated; Section \ref{sec:problem_transformation} reformulates the OPMSF problem according to different faults; Section \ref{sec:algorithm_design} details the proposed algorithms; the performance of the proposed algorithms are evaluated in Section \ref{sec:simulation}. Finally, the conclusion is drawn in Section \ref{sec:conclusion}.

\section{System Models and Problem Formulation}
\label{sec:model}
SPS is an integrated power system, which consists of power generators, energy storage modules (ESMs), service loads,  propulsion modules (PMs), converters, and power network. 
In this section, these main models are introduced.
Then, the OPMS problem is formulated based on these models.
\begin{figure*}[ht]
\vspace{-2ex}
  \centering
    \includegraphics[width= 0.6 \textwidth]{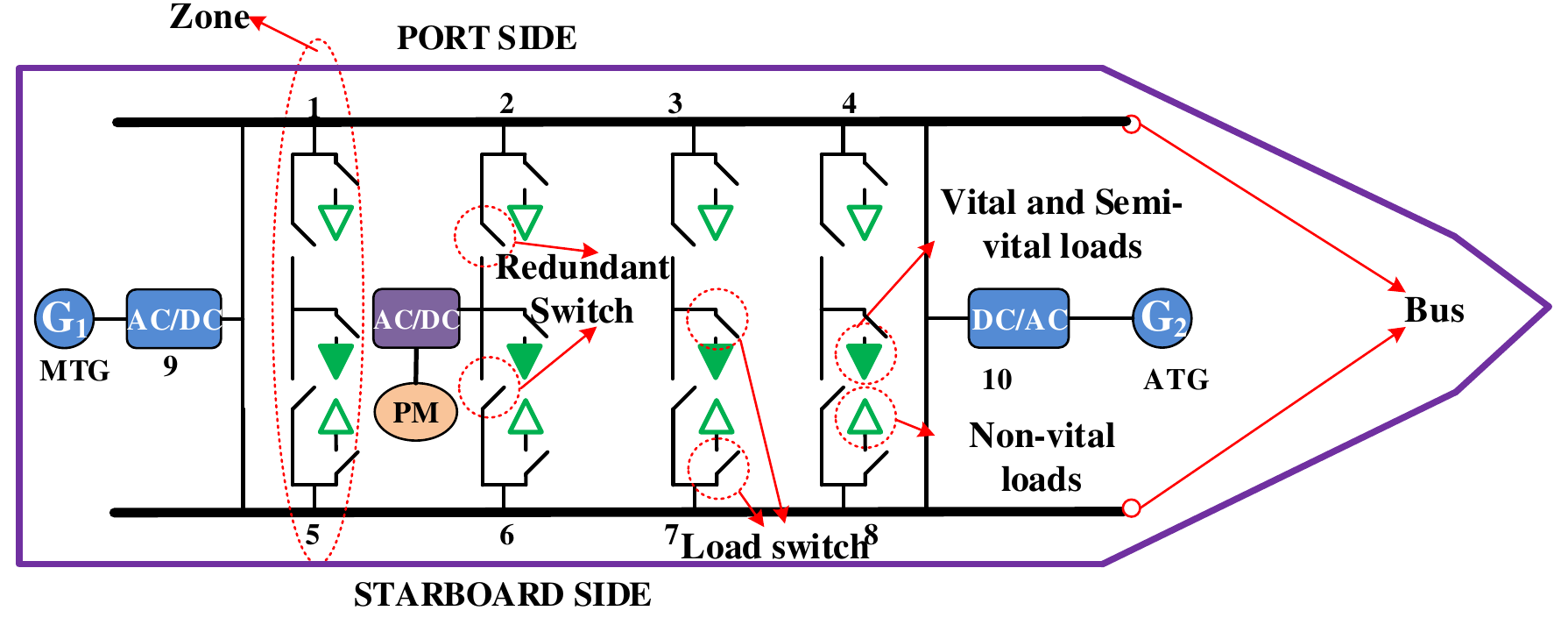}
  \caption{MVDC shipboard power system architecture.}
  \label{fig:arch}
  \vspace{-4ex}
\end{figure*}

\vspace{-2ex}
\subsection{System Structure Overview}
The classic architecture of MVDC SPS is shown in Fig.\ref{fig:arch}. This architecture adopts a zonal approach with a starboard bus (SB) and a port bus (PB), and the SPS is parted into $ Z $ electric zones. The generators are classified into two types: main turbine generator (MTG) and auxiliary turbine generator (ATG). 
The DC zones are powered by a set of generators and converters denoted by $m \in \mathcal{M}$.  The loads are powered by a set of buses which run longitudinally along the PB and SB. 
In this work, we assume that the SPS operates in discrete time with $t \in \mathcal{T} = \{1,2,3, \cdots \} $ and time interval $\Delta t$.

\begin{table}
\centering
  \caption{{Main notations and abbreviations}}
  \label{tab:1}
  \footnotesize
  \begin{tabular}{p{1.5cm}|p{6.5cm}} 
  \hline
  Notation    & Physical interpretation \\
  \hline
  \hline
  $ \mathcal{M} $, $ m $ & Set and index of generators and converters \\
  $ \mathcal{Z} $, $ z $ & Set and index of DC zones \\
  $ \mathcal{R} $, $ r $ & Set and indexes of propulsion modules \\
  $ \mathcal{N} $, $ n $ & Set and index of ESMs \\
  $ \mathcal{W} $, $W$, $ w $ & Set, number and index of island parts \\
  $ \mathcal{T} $, $T$, $ t $ & Set, number and index of time \\
  $ {\max} $, $ {\min} $ & Superscript denoting minimum and maximum \\
  $ P_{{\rm{g}}, m}(t)$ & Output power of generator $m$ at time $t$\\
  $ \delta_{{\rm{g}},m}(t) $  & Status of generator $m$, (1/0 = online/offline)\\
  $ P_{{\rm{e}}, n}(t)$  & Output power of the ESM in zone $z$ at time $t$ \\
  $ E_{{\rm{e}}, n}(t)$   & Capacity of the ESM in zone $z$ at time $t$\\
  $ P_{{\rm{pr}}, r}(t) $ & Propulsion power of the $ r $-th propulsion module\\
  $ P_{\rm{vs}}(t) $  & Power demand of vital and semi-vital loads at time $t$ \\
  $ P_{\rm{no}}(t) $ & Power demand of non-vital loads at time $t$ \\
  $ \rho(t) $ & Lower bound of the load-shedding amount of $P_{\rm{no}}(t)$\\
  $ P_{{\rm{G}}, w}(t) $  & Output power of generators in island part $w$ at time $t$\\
  $ P_{{\rm{E}}, w}(t) $  & Output power of the ESMs in island part $w$ at time $t$\\
  $ P_{{\rm{PR}}, w}(t) $  & Output power of the propulsion modules in island part $w$ at time $t$\\
  $ P_{{\rm{L}}, w}(t) $  & Total load demand in island part $w$ at time $t$\\
  $ V(t) $ & Ship speed at time $t$\\
  $ C(t) $ & Total operating cost at time $t$\\
  $ C_{\rm{G}}(P_{{\rm{g}},m}(t)) $ & Fuel cost of generator $m$ at time $t$\\ 
  $ C_{\rm{E}}(P_{{\rm{e}}, n}(t)) $ & Operating cost of the ESM $z$ at time $t$\\  
  $ C_{\rm{L}}(\rho(t)) $ & Load shedding cost at time $t$ \\
  $ S_{{\rm{P}}, z}(t) $ & Redundancy switches of PB in zone $ z $ at time $t$\\
  $ S_{{\rm{S}}, z}(t) $ & Redundancy switches of SB in zone $ z $ at time $t$\\
  $ D, D_{\rm{d}} $  & Travel distance, reduced travel distance \\ \hline
  \end{tabular}
\end{table}

\vspace{-2ex}
\subsection{Generation Model}
The output power has to be bounded to keep the safe operation of generators and avoid mechanical damage. Those constraints are described as follows, $\forall t \in \mathcal{T}, \forall m \in \mathcal{M}$:
\begin{align} 
\label{eqn:PG_limit} & \delta_{{\rm{g}},m}(t) P_{{\rm{g}}, m}^{\min} \leqslant P_{{\rm{g}}, m}(t) \leqslant  \delta_{{\rm{g}},m}(t) P_{{\rm{g}}, m}^{\max}, \\ 
\label{eqn:P_Rate_limit} & -R_{{\rm{g}},m}^{\max} \leqslant {  P_{{\rm{g}}, m}(t)} - P_{{\rm{g}}, m}(t-1) \leqslant R_{{\rm{g}},m}^{\max}, \\
\label{eqn:start-up_detect} & \delta_{{\rm{g}},m}(t) - \delta_{{\rm{g}},m}(t-1) \leqslant y_{{\rm{g}},m}(t),\\
\label{eqn:min_time_operation} & y_{{\rm{g}},m}(t)T_{m}^{\min} \leqslant \delta_{{\rm{g}},m}(t) +  \cdots + \delta_{{\rm{g}},m}(t+T_{m}^{\min}-1) ,
\end{align}
where $P_{{\rm{g}}, m}(t)$ denotes the output power of generator $m$ at time $t$,
$\delta_{{\rm{g}},m}(t) $ the status of generator $m$,
$R_{{\rm{g}},m}^{\max}$ the maximum ramp-rate of generator $m$. 
The ramp-rate of $P_{{\rm{g}},m}(t)$ is limited by (\ref{eqn:P_Rate_limit}). The start-up state $ y_{{\rm{g}},m}(t) $ (binary variable) is detected by (\ref{eqn:start-up_detect}). Eq. (\ref{eqn:min_time_operation}) describes operation time management where $T_m^{\min}$ denotes the minimum operation time of generator $m$.

The fuel consumption cost $C_{\rm{G}}(P_{{\rm{g}}, m}(t))$ of generator $m$ at time $t$ is expressed as, $\forall m \in \mathcal{M} $:
\begin{align*}
C_{\rm{G}}(P_{{\rm{g}}, m}(t)) &  =  
a_{{\rm{g}},m} ( P_{{\rm{g}}, m}(t))^2 \Delta t + b_{{\rm{g}},m}  P_{{\rm{g}}, m}(t) \Delta t \\ 
&  + c_{{\rm{g}},m} \delta_{{\rm{g}},m}(t) \Delta t  , 
\end{align*}
where $C_{\rm{G}}(\cdot)$ denotes the fuel cost function in arbitrary monetary unit (m.u.), which can be approximately represented by a quadratic function of produced power $P_{{\rm{g}}, m}(t)$.
$a_{{\rm{g}},m}$, $b_{{\rm{g}},m}$, and $c_{{\rm{g}},m} $ are constants determined by technical specifications of generator $m$. 
{Generators in SPS are too small in size to heat up the equipments in hours to drive the steam turbine compared to that in terrestrial grids. The maximum startup time is typical five minutes \cite{ieeestd4512017}. Thus the startup cost is neglected.}


\subsection{Energy Storage Module }
\label{subsec:ESM}
The third option of ESM location of multi-zone SPS in \cite{yan2011optimal} is employed in each zone, which is shown in Fig. \ref{fig:ESM}.
ESM is incorporated with the PB or SB at the longitude bus. 
In this option, ESM can supply power for propulsion modules and service loads. 
 \begin{figure}[ht]
 \begin{center}
 \setlength{\belowcaptionskip}{-0.5cm}
 \includegraphics[width= 0.75\linewidth]{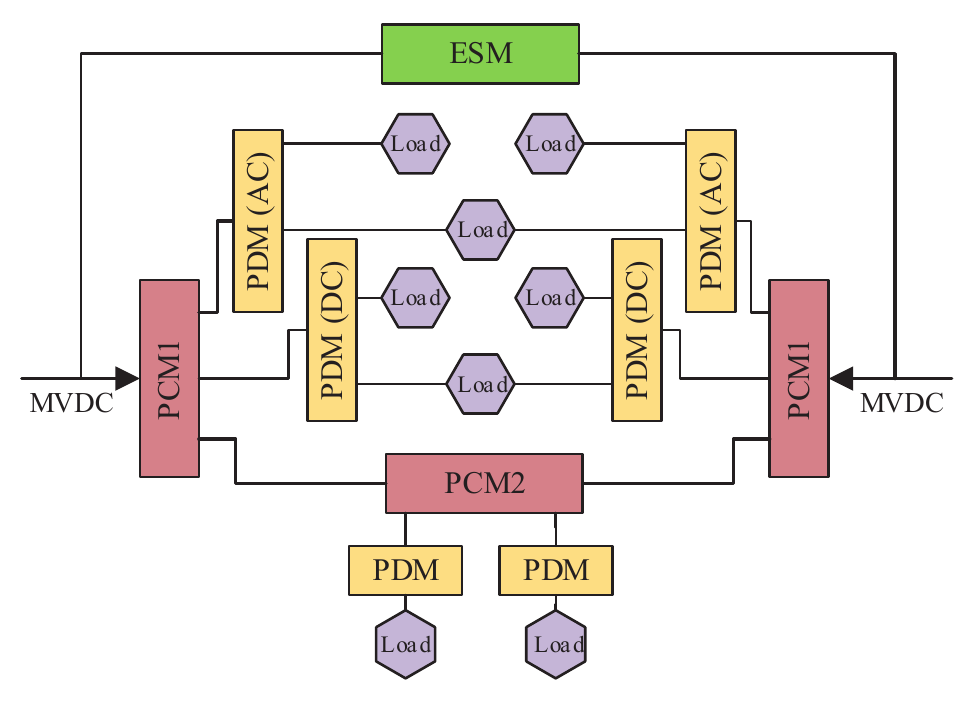}
 \caption{ Energy storage module location in zonal power system.} 
 \captionsetup{justification=centering} 
 \label{fig:ESM} 
 \end{center}
 \end{figure}
The maximum produced or absorbed power of ESMs are denoted by the maximum charging and discharging powers $P_{\rm{e}}^{\max}$, $P_{\rm{e}}^{\min}$, respectively. Hence the ESMs satisfy the following constraints, {$\forall t \in \mathcal{T}, \forall n \in \mathcal{N} \subseteq \mathcal{Z}$}:
\begin{align}
\label{eqn:pe_range} & P_{\rm{e}}^{\min} \leqslant P_{{\rm{e}}, n}(t) \leqslant  P_{\rm{e}}^{\max}, \\
\label{eqn:e_capacity_range} & E^{\min} \leqslant E_{{\rm{e}}, n}(t) \leqslant E^{\max}, \\
\label{eqn:e_capacity_change} & E_{{\rm{e}}, n}(t) = E_{{\rm{e}}, n}(t-1) + P_{{\rm{e}}, n}(t) \Delta t,
\end{align}
where $E_{{\rm{e}}, n}(t)$ and $P_{{\rm{e}}, n}(t)$ represent residual energy and output power of the ESM in zone $z$ at time $t$. $P_{\rm{e}}^{\min}$ is a negative constant. 
{$E^{\min}$ and $E^{\max}$ are the lower and upper bounds of $E_{{\rm{e}},n}$.}
The capital cost is constant, because it is just relative to $E^{\max}$ and $P^{\max}$ \cite{shi2015real,schoenung2011energy,Khani2015Real}. Besides,
the charge energy cannot be purchased from the main grid on the voyage. {Thus, the capital cost and power purchasing cost of ESM are ignored here. 
The life-cycle cost $ C_{\rm{lc}}(P_{{\rm{e}}, n}(t)) $ of ESM $n$ at each time $t$ is mainly considered, which is modeled as \cite{shi2015real},} $ \forall n \in \mathcal{N}$:
\begin{equation*}
\begin{aligned}
\label{eqn:battery_cost}
{ C_{\rm{E}}(P_{{\rm{e}}, n}(t)) = a_{\rm{lc}} P_{{\rm{e}}, n}(t)^2 \Delta t + c_{\rm{lc}} \Delta t,}
\end{aligned}
\end{equation*}
where $ a_{\rm{lc}}$ and $ c_{\rm{lc}} $ are positive constants.

\vspace{-2ex}
\subsection{Service Loads}
In SPS, all the service loads can be divided into three categories: vital, semi-vital, and non-vital loads.
Vital loads cannot be interrupted, which are always required for normal mode. Semi-vital loads can be interrupted in a short-time scale. In the case of emergencies, non-vital loads can be shed in a mid-time scale to maintain power balance. At a certain time, vital and semi-vital loads can be powered by the PB or SB using redundant switches which are represented by a pair of switches as shown in Fig.\ref{fig:arch}. 
The constraints are detailed in \ref{subsec:zone_switch}. Each non-vital load connects to one bus, PB or SB. 
{ Each load is controlled by a switch that is used for load shedding.}
In this work, load shedding is only considered for non-vital loads. 
Thus the demand at time $t$ can be described as:
\begin{align}
\label{eqn:total_load} P_{\rm{L}}(t) = P_{\rm{vs}}(t) + (1 - \rho (t))P_{\rm{no}}(t), \rho (t) \in [0,1].
\end{align}
where $\rho(t)$ indicates the lower bound of the load-shedding amount of all the non-vital loads at time $t$, which is a continuous variable to reduce the computation complexity.

\vspace{-3ex}
\subsection{Propulsion Module}
The ship velocity is determined by the propulsion power.
The relationship between them depends on hull resistance at specific conditions and is formulated as:
\begin{equation}
\label{eqn:prop_speed}
P_{\rm{PR}}(t) = \alpha  V(t)^{\beta}, \quad \forall t \in \mathcal{T},
\end{equation}
where $V(t)$ denotes the ship velocity at time $t$, $P_{\rm{PR}}(t)$ the total required propulsion power to reach the velocity $V(t)$, $\alpha$ the matching coefficient for propulsion power and velocity, and $\beta$ a constant which equals to 3 for conventional hull form \cite{kanellos2014optimal}. Propulsion power is the sum of all the propulsion power $P_{\rm{PR}}(t) = \sum_{r \in \mathcal{R} } P_{{\rm{pr}},r}(t)$.
Here the velocity is bounded by the maximum and minimum ship speed. 
\begin{align} 
\label{eqn:speed}& V^{\min} \leqslant V(t) \leqslant  V^{\max}, \quad \forall t \in \mathcal{T}.
\end{align}

The total travel distance of all the time should almost equal to the travel distance target $D$. 
\begin{align} 
\label{eqn:rest_distance}&  \sum\nolimits_{t \in \mathcal{T}} V(t) \Delta t = D .
\end{align}

\vspace{-4ex}
\subsection{Power Network Model}
\label{subsec:zone_switch}
\subsubsection{Zone Redundant Switch}
The redundant design of zonal SPS is used for improving the reliability of power supply for vital and semi-vital loads. The structure diagram is shown in Fig. \ref{fig:arch}.
In each zone, every vital and semi-vital loads can be powered by PB or SB at a certain time. The redundant switches determine that the vital and semi-vital loads are powered by PB or SB. $ S_{{\rm{P}}, z}(t)$ and $S_{{\rm{S}}, z}(t) $ ( 0/1=open/close ) denote the redundant switches connected with PB and SB in zone $z$, respectively. 
Thus if ${S}_{{\rm{P}}, z}(t) = 1$ and $ {S}_{{\rm{S}}, z}(t) = 0$, the vital and semi-vital loads in zone $z$ are powered by PB, and conversely powered by SB. 
Thus, the related constraints are written as, $\forall t \in \mathcal{T},  \forall z \in \mathcal{Z}$:
\begin{align} 
\label{eqn:redundant_switches} & {S}_{{\rm{P}}, z}(t) + {S}_{{\rm{S}}, z}(t) =1, \ {S}_{{\rm{P}}, z}(t), {S}_{{\rm{S}}, z}(t) \in \{0,1\}, \\
\label{eqn:redundant_switches_time1} & y_{{\rm{P}},z}(t)T_{s}^{\min} \leqslant {S}_{{\rm{P}}, z}(t) +  \cdots + {S}_{{\rm{P}}, z}(t+T_{s}^{\min}-1), 
\end{align}
where constraint (\ref{eqn:redundant_switches_time1}) describes the minimum switching time $T_{s}^{\min}$ to avoid frequently switching.

\subsubsection{Power Balance}
Each generator directly connects to the corresponding converter.
The power equation of converter $m$ is described as:
\begin{align} 
\label{eqn:ACDC_output_load} &  P_{{\rm{oc}}, m}(t) = \zeta  P_{{\rm{ic}}, m}(t),  \quad \forall m \in \mathcal{M}, t \in \mathcal{T},
\end{align}
where $P_{{\rm{ic}}, m}(t)$ and $P_{{\rm{oc}}, m}(t)$ represent the input and output power of converter $m$. Here the power loss of converter is considered as a constant ratio $(1 - \zeta)$ with the input power $P_{{\rm{ic}},m}(t)$. 
{Since the generator and converter are tight coupled, $P_{{\rm{g}},m}(t) \approx P_{{\rm{ic}}, m}(t)$.} Thus (\ref{eqn:ACDC_output_load}) can be transfered into
\begin{align} 
\label{eqn:ACDC_relationship} &  P_{{\rm{oc}}, m}(t) = \zeta P_{{\rm{g}},m}(t), \quad \forall m \in \mathcal{M}, t \in \mathcal{T}.
\end{align}

The power in DC part comes from converters and ESMs. The loads in DC part include service loads, propulsion modules. Due to the tight couple in SPS, the power loss of transmission line can be ignored.
Hence the supply and demand balance in DC part is given as, $\forall t \in \mathcal{T}$:
\begin{equation}
\begin{aligned} 
\label{eqn:DC_output_load}  \sum\nolimits_{m \in \mathcal{M}} P_{{\rm{oc}}, m}(t) + \sum\nolimits_{n \in \mathcal{N}} P_{{\rm{e}}, n}(t) 
= P_{\rm{L}}(t) + P_{{\rm{PR}}}(t).
\end{aligned}
\end{equation}

\subsection{Greenhouse Gas Emissions}
According to the International Maritime Organization policy, there are two indicators: Energy Efficiency Design Indicator (EEDI) and Energy Efficiency Operation Indicator (EEOI). 
Considering that only one operation point is evaluated in EEDI, the limitation of GHG emissions cannot be guaranteed during the entire operation. Hence, EEOI is more suitable for the GHG emission evaluation in the entire shipboard operation. 
EEOI is defined as the ratio between produced $\rm{CO}_2$ mass and transport work \cite{Guidelines2009IMO}. Thus, EEOI limitation is represented by, $\forall t \in \mathcal{T}$:
\begin{equation}
\label{eqn:eeoi_limit}
\begin{aligned}
\dfrac{\sum_{ g \in \mathcal{G} } F{(P_{\rm{g}, m}(t))}}{F_{\rm{sl}} V(t) \Delta t} \leqslant \rm{EEOI}^{\max},
\end{aligned} 
\end{equation}
where $F(P_{\rm{g}, m}(t)) $ is the function of produced $\rm{CO_2}$ mass.
$\rm{EEOI}^{\max}$ is the EEOI limitation that is in $\rm{gCO_2tn^{-1}nm^{-1}}$. 
The ship load factor $F_{\rm{sl}}$ is determined by ship type and carried cargo, which is in tonne (tn) or kilotonne (ktn).

According to \cite{shang2016economic}, the function $F(P_{\rm{g}, m}(t)) $ is expressed as, $\forall t \in \mathcal{T}, g \in \mathcal{G} $: 
{\small
{\begin{equation*} 
F(P_{\rm{g}, m}(t)) \stackrel{\vartriangle}{=} a_{\rm{g}, m} ( P_{\rm{g}, m}(t))^2 \Delta t + b_{\rm{g}, m} P_{\rm{g}, m}(t) \Delta t + c_{\rm{g}, m} \delta_{g}(t) \Delta t.
\end{equation*} }}

\vspace{-2ex}
\subsection{ Optimal Power Management of SPS Problem}
To address the OPMSF problem, it needs to formulate the OPMS problem at first. 
$S_{{\rm{P}},z}(t)$ and $S_{{\rm{S}},z}(t)$ do not need to be reconfigured in the normal mode. Thus, they are not included in the control vector $\bm u(t)$ that is defined as
\begin{equation*}
\bm u(t) \stackrel{\vartriangle}{=} ( \bm \delta_{{\rm{g}},m}(t), \bm y_{{\rm{g}},m}(t), \bm P_{{\rm{g}},m}(t), \bm P_{{\rm{e}}, n}(t), \bm P_{{\rm{pr}},r}(t), \rho(t)).
\end{equation*}

Then, the objective is to minimize the total operating cost, which is defined as
\begin{equation*}
\begin{aligned}
C(t) & {\stackrel{\vartriangle}{=} \sum\nolimits_{m \in \mathcal{M}} C_{\rm{G}}(P_{{\rm{g}},m}(t)) + \xi_{\rm{e}} \sum\nolimits_{n \in \mathcal{N}} C_{\rm{E}}(P_{{\rm{e}}, n}(t) ) } \\
& {+ \xi_{\rm{l}} C_{\rm{L}}(\rho(t)), \forall t \in \mathcal{T},}
\end{aligned}
\end{equation*}
where 
$\xi_{\rm{e}}$ and $\xi_{\rm{l}}$ are the parameters used to make a tradeoff between the fuel consumption cost, the operating cost of ESMs, and the load shedding cost.
The cost functions of generators and ESMs are formulated according to the operating cost in operation. The cost function of load shedding is used to avoid shedding all the service loads for lower operating cost, which is a penalty cost for the coordination with other adjustment methods. The cost function of load shedding $C_{\rm{L}}(\rho(t))$ at each time $t$ is represented by the load-shedding amount $P_{\rm{ls}}(t)$.
\begin{equation*}
C_{\rm{L}}(\rho(t)) = P_{\rm{ls}}(t) = \rho(t) P_{\rm{no}}(t) \Delta t.
\end{equation*}

Hence the OPMS problem integrated with load shedding 
is described as:
\begin{equation*}
\begin{aligned}
\label{Objective} \textbf{P1}:  \;
\underset{ \bm u(t) } \min \; & \sum\nolimits_{t \in \mathcal{T}} C(t) \\
\text{s.t.} \quad & (\ref{eqn:PG_limit})-(\ref{eqn:redundant_switches_time1}), (\ref{eqn:ACDC_relationship})-(\ref{eqn:eeoi_limit}),
\end{aligned}
\end{equation*}

\begin{proposition}
For meeting the load demand firstly, load shedding is only adopted when GS and ESMC cannot solve the OPMSF problem. Thus, $\xi_{\rm{l}}$ has to satisfy that
\begin{equation}
\label{eqn:load_sheddding_penalty_xi}
\xi_{\rm{l}} >  \max \{ 2 a_{g} P_{g}^{\max} + b_{g}, 2 \xi_{\rm{e}} a_{\rm{lc}} P_{\rm{e}}^{\max}  \},
\end{equation}
where $ P_{g}^{\max} = \underset{m} \max \left( P_{{\rm{g}},m}^{\max} \right), a_{g} = \underset{m} \max \left( a_{{\rm{g}},m} \right), b_{g} = \underset{m} \max \left( b_{{\rm{g}},m} \right), m \in \mathcal{M}$, $P_{\rm{e}}^{\max} = \underset{n} \max \left( P_{{\rm{e}}, n}^{\max} \right), n \in \mathcal{N} $.
\end{proposition}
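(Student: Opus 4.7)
The plan is to prove (\ref{eqn:load_sheddding_penalty_xi}) by contradiction via a feasibility-preserving perturbation of an assumed optimum. Suppose $\bm{u}^*$ attains the minimum of \textbf{P1} with $\rho^*(t_0) > 0$ at some $t_0 \in \mathcal{T}$, and suppose simultaneously that GS and ESMC \emph{can} cover the full load at $t_0$, so that some feasible perturbation $(\Delta P_{\rm{g},m}, \Delta P_{\rm{e},n})$ injects an additional $\varepsilon P_{\rm{no}}(t_0)$ of power at $t_0$ (for some $\varepsilon>0$) while respecting (\ref{eqn:PG_limit})--(\ref{eqn:min_time_operation}), (\ref{eqn:pe_range})--(\ref{eqn:e_capacity_change}), and the power balance (\ref{eqn:DC_output_load}). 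I would then replace $\rho^*(t_0)$ by $\rho^*(t_0)-\varepsilon$, absorb the extra load via that perturbation, and show the total cost strictly decreases, contradicting the optimality of $\bm{u}^*$.

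Two ingredients drive the computation. First, since $C_{\rm{L}}(\rho)=\rho P_{\rm{no}}(t)\Delta t$ is linear in $\rho$, the load-shedding term drops by exactly $\xi_{\rm{l}}\,\varepsilon P_{\rm{no}}(t_0)\,\Delta t$. Second, the algebraic identity $a_{\rm{g},m}(P+\Delta)^2-a_{\rm{g},m}P^2 = 2a_{\rm{g},m}P\Delta + a_{\rm{g},m}\Delta^2$ combined with the box bound $P+\Delta\le P_{\rm{g},m}^{\max}$ implies that increasing one generator's output by $\Delta$ raises $C_{\rm{G}}$ by at most $(2a_g P_g^{\max}+b_g)\,\Delta\,\Delta t$, because substituting $P \le P_g^{\max}-\Delta$ into the expansion absorbs the quadratic correction with a sign in our favour. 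A parallel argument yields the ESM bound $2\xi_{\rm{e}} a_{\rm{lc}} P_{\rm{e}}^{\max}\,\Delta\,\Delta t$. Summing over the perturbed units and invoking (\ref{eqn:load_sheddding_penalty_xi}), the net change in $\sum_t C(t)$ is bounded above by $\bigl[\max\{2a_g P_g^{\max}+b_g,\;2\xi_{\rm{e}} a_{\rm{lc}} P_{\rm{e}}^{\max}\} - \xi_{\rm{l}}\bigr]\,\varepsilon P_{\rm{no}}(t_0)\,\Delta t < 0$, giving the required contradiction and hence the proposition.

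The main obstacle lies in the time coupling through the ramp bound (\ref{eqn:P_Rate_limit}) and the ESM dynamics (\ref{eqn:e_capacity_change})--(\ref{eqn:e_capacity_range}): a single-instant change at $t_0$ may violate (\ref{eqn:P_Rate_limit}) at $t_0\pm1$, or it may push $E_{{\rm{e}},n}(t)$ outside $[E^{\min},E^{\max}]$ for all $t>t_0$. I would therefore allow the perturbation to extend in time: for the generator case, smooth the increment across $\{t_0-1,t_0,t_0+1\}$ within the ramp envelope and rebalance (\ref{eqn:DC_output_load}) at the neighbouring steps using another unit with slack; for the ESM case, apply an equal-and-opposite correction at a later step so that $E_{{\rm{e}},n}$ is restored beyond a finite window, leaving (\ref{eqn:e_capacity_range}) intact. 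Since every component of such a multi-step perturbation still has magnitude bounded by $P_g^{\max}$ or $P_{\rm{e}}^{\max}$, the per-term marginal-cost estimates of the second paragraph apply individually, and the total cost change remains dominated by $-\xi_{\rm{l}}\varepsilon P_{\rm{no}}(t_0)\,\Delta t$ under (\ref{eqn:load_sheddding_penalty_xi}). If no such time-extended compensation is available anywhere in $\mathcal{T}$, then by definition GS and ESMC cannot cover the load at $t_0$, and load shedding is truly unavoidable, which is exactly the situation the proposition reserves for $\rho^*(t_0)>0$.
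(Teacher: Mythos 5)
Your core computation is exactly the paper's: the shedding penalty per unit of restored load must exceed the marginal cost of producing that unit, and the bound $a_{{\rm{g}},m}(\hat P_{{\rm{g}},m}+P_{{\rm{g}},m})+b_{{\rm{g}},m}\leqslant 2a_{g}P_{g}^{\max}+b_{g}$ together with its ESM analogue $2\xi_{\rm{e}}a_{\rm{lc}}P_{\rm{e}}^{\max}$ is precisely how the paper derives (\ref{eqn:load_sheddding_penalty_xi}); the only packaging difference is that you embed this marginal comparison in a perturbation-at-the-optimum contradiction, which is a cleaner way to make the same inequality do logical work. Where you go beyond the paper is the third paragraph on ramp and ESM-energy coupling, which the paper silently ignores; that instinct is right, but the claim that a time-extended compensation keeps the total cost change dominated by $-\xi_{\rm{l}}\varepsilon P_{\rm{no}}(t_0)\Delta t$ is asserted rather than proved. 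Concretely, if restoring the trajectory of $E_{{\rm{e}},n}$ requires an offsetting charge at some $t_1>t_0$, that charge must itself be covered by extra generation at $t_1$ through (\ref{eqn:DC_output_load}), so the perturbation incurs the ESM marginal cost at two instants plus a generator marginal cost at $t_1$; the sum of these terms can exceed $\xi_{\rm{l}}$ even though each is individually below it, because (\ref{eqn:load_sheddding_penalty_xi}) bounds $\xi_{\rm{l}}$ by a maximum, not a sum. The same objection applies to rebalancing at $t_0\pm1$ with a second unit whose marginal saving need not match the first unit's marginal cost. Since the paper's own proof only ever performs the single-unit, single-instant comparison, your first two paragraphs already reproduce it in full; the residual gap lies in territory the paper does not enter either.
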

\begin{proof}
Please see Appendix \ref{append:theorem1}.
\end{proof}

 
\section{Problem Analysis and Transformation}
\label{sec:problem_transformation}
{
In this section, fault preprocessing and problem transformation are carried out to make the problem more tractable.} The problem transformation is illustrated in Fig. \ref{fig:Problem_transformation}.
\begin{figure}[ht]
\begin{center}

\setlength{\belowcaptionskip}{-0.5cm}
\includegraphics[width= 0.99 \linewidth]{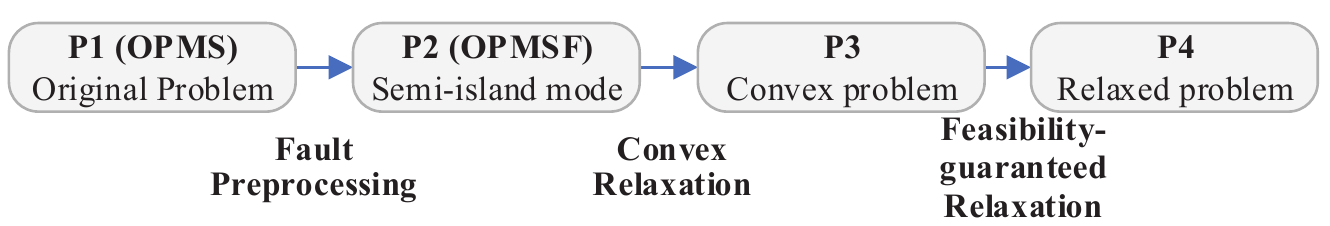} 
\caption{Problem transformation.}
\captionsetup{justification=centering} 
\label{fig:Problem_transformation}
\end{center}  
\end{figure} 
%
%

%
\subsection{Fault Preprocessing}
{
This part includes fault analysis and problem reformulation.}

\subsubsection{Fault Analysis}
In this work, two main physical faults are considered: generator fault and bus fault. Based on our analysis,
all the faults are divided into three modes, i.e., island fault, semi-island fault, and non-island fault. The examples of latter two are shown in Fig. \ref{fig:island}.

\begin{figure}[ht]
\begin{center} 
\setlength{\belowcaptionskip}{-0.3cm}
\includegraphics[width= 0.48 \textwidth]{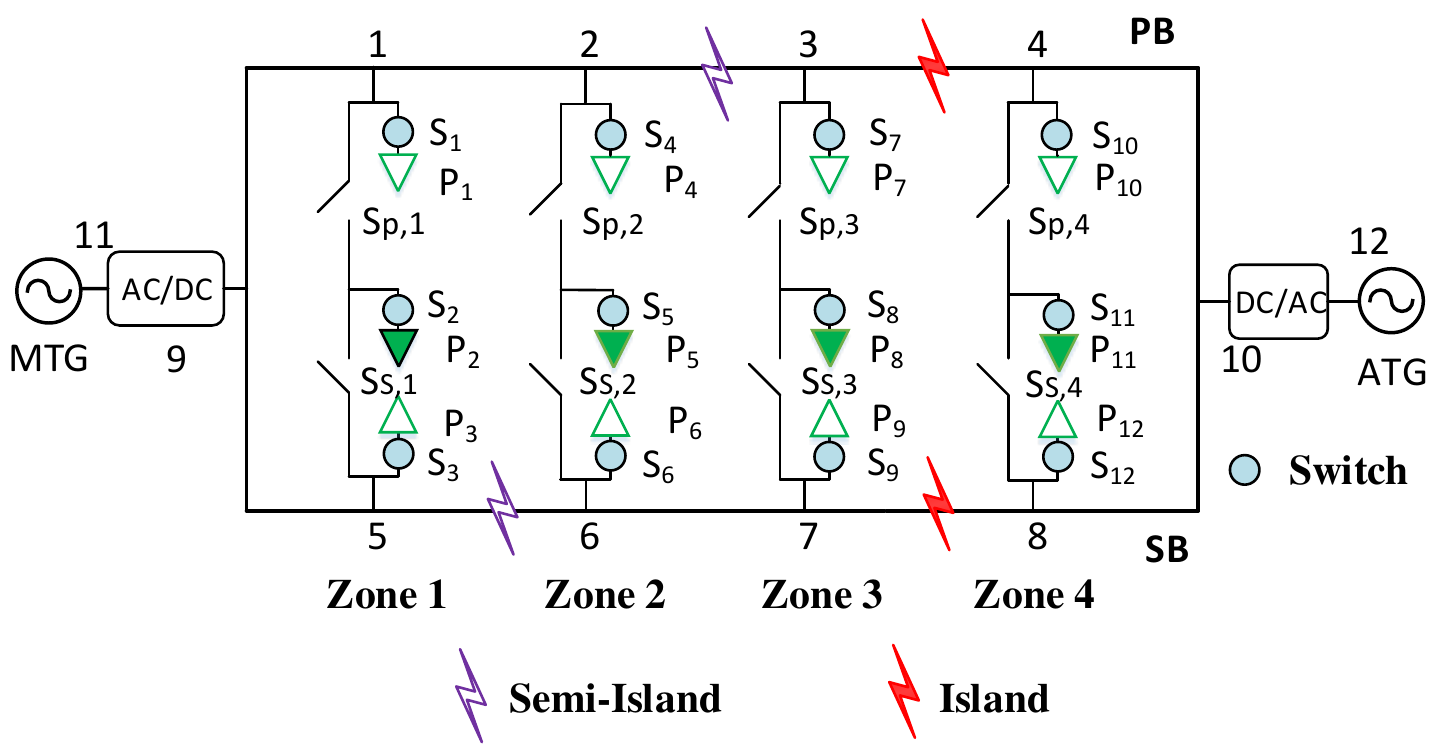} 
\caption{ Semi-island and island scenario. }
\captionsetup{justification=centering} 
\label{fig:island}
\end{center}
\end{figure}

\paragraph{\textbf{Non-island mode}} If faults only happen at one bus (PB or SB), these cases are defined as the non-island mode. In this mode, the system structure has not been significantly changed, and the power balance constraint is also not affected.

\paragraph{\textbf{Island mode}} Faults that happen at both sides (PB and SB) divide the zones into several island parts. 
In this mode, the redundant switches do not need to be changed. A case is shown in Fig. \ref{fig:island}. The DC zones is divided into two parts without any connection: zone 1, 2, 3 and zone 4.

\paragraph{\textbf{Semi-island mode}} This mode is similar to the island mode, but there are coupled zones between the island parts. For example, in the semi-island mode of Fig. \ref{fig:island}, the loads in zone 2 can be powered by two islanding parts. If $S_{{\rm{S}},2} = 1$ and $S_{{\rm{P}},2} = 0$, zone 2 is powered by ATG, and conversely powered by MTG. Zone 2 is the coupled part, and $S_{{\rm{S}},2}$, $S_{{\rm{P}},2}$ are the coupled redundant switches.

\subsubsection{ Problem reformulation }
Based on three fault-modes, the OPMS problem is reformulated respectively.

\paragraph{\textbf{Non-island Mode}}
In the non-island mode, all the service loads and propulsion modules can be powered by MTGs and ATGs, and the redundant switches are reconfigured to connect to the undamaged bus (PB or SB). Thus, the OPMS problem for this fault mode is \textbf{P1} with determined redundant switch configuration. Similarly, the OPMS problem in generator fault is \textbf{P1} with determined state $\delta_{{\rm{g}},m}(t)$. Thus, the optimization problems in this mode are classified as \textbf{P1}.

\paragraph{\textbf{Island Mode}}
Since island-mode faults damage the power network, the power balance constraint (\ref{eqn:DC_output_load}) are correspondingly changed. 
Each island part is denoted by $w \in \mathcal{W}$. Here the power constraint are reconstructed as, $\forall t \in \mathcal{T}, \forall w \in \mathcal{W}$:
{\small
\begin{align}
\label{eqn:DC_output_load_fault_detail} 
 & \sum\nolimits_{z \in \mathcal{L}_{w}} P_{{\rm{vs}},z}(t) + (1-\rho_w(t))P_{{\rm{NO}},w}(t) + \sum\nolimits_{r \in \mathcal{R}_{w} } P_{{\rm{pr}}, r}(t) \nonumber \\
 & = \sum\nolimits_{m \in \mathcal{M}_{w}} \zeta P_{{\rm{g}}, m}(t) + \sum\nolimits_{n \in \mathcal{N}_{w}} P_{{\rm{e}}, n}(t) ,
\end{align}
where }$ \mathcal{L}_{w} $, $ \mathcal{R}_{w} $, $ \mathcal{M}_{w} $, and $ \mathcal{N}_{w} $ denote the sets of service loads (vital and semi-vital), propulsion modules, generators, and ESMs in the $w$-th island part respectively.
Hence the constraint (\ref{eqn:DC_output_load}) is divided into $W$ constraints. $P_{{\rm{vs}},z}(t)$ denotes the vital and semi-vital loads in zone $z$ at time $t$.
{$P_{{\rm{NO}}, w}(t)$ represents non-vital loads in island $w$}. 
$P_{{\rm{G}}, w}(t)$ and $P_{{\rm{E}}, w}(t)$ denote the output power of the generators and ESMs in the $w$-th island part, respectively.
$P_{{\rm{L}}, w}(t)$ and $P_{{\rm{PR}}, w}(t)$ are the demand of the service loads and propulsion modules in the $w$-th island part, respectively. 
Then, constraint (\ref{eqn:DC_output_load_fault_detail}) can be simplified as, $\forall t \in \mathcal{T}, \forall w \in \mathcal{W}$:
\begin{equation}
\begin{aligned}
\label{eqn:DC_output_load_fault} & P_{{\rm{L}}, w}(t) + P_{{\rm{PR}}, w}(t) = \zeta P_{{\rm{G}}, w}(t) + P_{{\rm{E}}, w}(t),
\end{aligned}
\end{equation}
where $ P_{{\rm{L}}, w}(t) = P_{{\rm{VS}}, w}(t) + (1-\rho_w(t))P_{{\rm{NO}}, w}(t) $. $P_{{\rm{VS}}, w}(t)$ and $P_{{\rm{NO}}, w}(t)$ denote the vital and semi-vital loads, and non-vital loads in island $w$ respectively.

Since the system is divided into $W$ island parts and the propulsion modules in one zone practically, \textbf{P1} is divided into $W$ independent problems. 
Only one problem needs to adjust $P_{{\rm{PR}},w}$. 
Based on the relationship of speed and propulsion power in (\ref{eqn:prop_speed}), the speed related constraints (\ref{eqn:speed}) and (\ref{eqn:rest_distance}) can be transformed as follows, $\forall t \in \mathcal{T}$:
{{\begin{align}
\label{eqn:distance_ppr}
& \sum\nolimits_{t \in \mathcal{T}} ( P_{\rm{PR}}{(t)} / \alpha )^{1/\beta} \Delta t = {D} ,\\
\label{eqn:range_ppr1} 
& \alpha ( V^{\min} )^{\beta} \leqslant P_{\rm{PR}}{(t)} \leqslant \alpha ( V^{\max} )^{\beta}.
\end{align}}
}

Thus the control vector of each part in island mode at time $t$ can be represented by 
{
{\begin{equation*}
\begin{aligned}
{\bm u}_w(t) \stackrel{\vartriangle}{=} & (\bm \delta_{{\rm{g}},m}(t), \bm y_{{\rm{g}},m}(t), \bm P_{{\rm{g}},m}(t), \bm P_{{\rm{e}}, n}(t), \bm P_{{\rm{pr}},r}(t), \rho_w (t)), \\
& \;  m \in \mathcal{M}_{w}, n \in \mathcal{N}_{w}, r \in \mathcal{R}_{w}.
\end{aligned}
\end{equation*}}
}

The objective function of OPMS problem for this fault mode in each island part $w$ is defined as 
{
{\begin{equation*}
\begin{aligned}
C_{w}(t) &  \stackrel{\vartriangle}{=} \sum\nolimits_{m \in \mathcal{M}_w } C_{\rm{G}}(P_{{\rm{g}},m}(t)) + \xi_{\rm{e}} \sum\nolimits_{z \in \mathcal{N}_w} C_{\rm{E}}(P_{{\rm{e}}, n}(t)) \\
&  + \xi_{\rm{l}} C_{\rm{L}}(\rho_{w}(t)), \forall t \in \mathcal{T} .
\end{aligned}
\end{equation*}}
}

\paragraph{\textbf{Semi-island Mode}}
In semi-island mode, the constraint (\ref{eqn:DC_output_load}) is also divided into $W$ constraints. 
The coupled redundant switches $S_{{\rm{P}},x}, x \in \Omega_{{\rm{pb}},w} $ and $S_{{\rm{S}},y}, y \in \Omega_{{\rm{sb}},w} $ affect the total power of loads of this mode. $ \Omega_{{\rm{pb}},w} $ and $ \Omega_{{\rm{sb}},w} $ denote the sets of coupled redundant switches connected to PB and SB respectively, which belong to the $w$-th island part.
In other words, the coupled redundant switch reconfiguration is to determine which island part supplies vital and semi-vital loads in the coupled zones. Each power balance constraint is related to the $S_{{\rm{P}},x}$ and $S_{{\rm{S}},y}$ in the coupled zones, which can be described as, $\forall t \in \mathcal{T}, \forall w \in \mathcal{W}$:
{
{\begin{equation}
\begin{aligned}
\label{eqn:DC_output_load_fault_semi} 
& \zeta P_{{\rm{G}}, w}(t) + P_{{\rm{E}}, w}(t) \\
= & P_{{\rm{L}},w}(t) + P_{{\rm{PR}}, w}(t) + \sum\nolimits_{z \in \Omega_{{\rm{pb}},w} }^{} S_{{\rm{P}},z}(t) P_{{\rm{vs}},z}(t)  \\
+ & \sum\nolimits_{z \in \Omega_{{\rm{sb}},w} } S_{{\rm{S}},z}(t) P_{{\rm{vs}},z}(t).
\end{aligned}
\end{equation}}
}

The relationship between the output power of the ESMs in different island parts can be described as, $\forall t \in \mathcal{T}$, $w \in \mathcal{W}$:
\begin{align}
\label{eqn:ESM_island_capacity} & E_{{\rm{e}},w}^{\min} \leqslant E_{{\rm{E}}, w}(t) \leqslant E_{{\rm{e}},w}^{\max}, \\
\label{eqn:ESM_island_range_1} &  P_{{\rm{e}},w}^{\min} \leqslant P_{{\rm{E}}, w}(t) \leqslant P_{{\rm{e}},w}^{\max},  \\
\label{eqn:ESM_island_relationship} &  E_{{\rm{E}}, w}(t) + P_{{\rm{E}}, w}(t) = E_{{\rm{E}}, w}(t+1),
\end{align}
where $Z_w^{}$ denotes the maximum number of ESMs that supply power to the $w$-th island part, which satisfies $\sum_{w \in \mathcal{W}}Z_w^{} = Z$. $E_{{\rm{e}},w}^{\min} = Z_w E_{\rm{e}}^{\min}, E_{{\rm{e}},w}^{\max} = Z_w E_{\rm{e}}^{\max}, P_{{\rm{e}},w}^{\min} = Z_w P_{\rm{e}}^{\min}$ and $ P_{{\rm{e}},w}^{\max} = Z_w P_{\rm{e}}^{\max}$.

Hence the control vector of each part in semi-island mode at time $t$ can be represented by
{{\begin{equation*}
\begin{aligned}
{\bm u}_w(t)  \stackrel{\vartriangle}{=} &(\bm \delta_{{\rm{g}},m}(t), \bm y_{{\rm{g}},m}(t), \bm{S}_{{\rm{P}}, x}(t), \bm{S}_{{\rm{S}}, y}(t), \bm P_{{\rm{g}},m}(t),  \\
& \bm P_{{\rm{e}}, n}(t), \bm P_{{\rm{pr}},r}(t), \rho_w (t) ), \; m \in \mathcal{M}_{w}, z\in \mathcal{Z}_{w}, \\
& n \in \mathcal{N}_{w}, r \in \mathcal{R}_{w}, x \in \Omega_{{\rm{pb}},w}, y \in \Omega_{{\rm{sb}},w}.
\end{aligned}
\end{equation*}}
}

\begin{table}[ht]  \scriptsize
\centering
\caption{{Differences of OPMS problem in three fault modes.}}
\label{tab:diff_three_modes}
\begin{tabular}{llll}
\hline
 & \begin{tabular}[c]{@{}l@{}}\textbf{Non-island}\\ \textbf{mode}\end{tabular} & \textbf{Island mode} & \begin{tabular}[c]{@{}l@{}}\textbf{Semi-island}\\ \textbf{mode}\end{tabular} \\ \hline
\begin{tabular}[c]{@{}l@{}}\textbf{Objective}\\ \textbf{function}\end{tabular} & same with \textbf{P1} & \begin{tabular}[c]{@{}l@{}}independent of each\\ other island part\end{tabular} & \begin{tabular}[c]{@{}l@{}}independent of each\\ other island part\end{tabular} \\ \hline
\begin{tabular}[c]{@{}l@{}}\textbf{Control} \\ \textbf{variables}\end{tabular} & \begin{tabular}[c]{@{}l@{}}some fixed\\ variables\end{tabular} & \begin{tabular}[c]{@{}l@{}}add $\rho_w$ in \\ each island part\end{tabular} & \begin{tabular}[c]{@{}l@{}}add $\rho_w$ and \\ redundant switches\end{tabular} \\ \hline
\begin{tabular}[c]{@{}l@{}}\textbf{No. of} \textbf{integer }\\ \textbf{variable}\end{tabular} & \tiny{$2MT$} & \begin{tabular}[c]{@{}l@{}} \tiny{$2MT$} \end{tabular} & \begin{tabular}[c]{@{}l@{}} \tiny{$(2M+2X)T$} \end{tabular} \\ \hline
\begin{tabular}[c]{@{}l@{}}\textbf{No. of} \\ \textbf{continuous }\\ \textbf{variable}\end{tabular} & \begin{tabular}[c]{@{}l@{}} \tiny{${(M+N}$} \\ \tiny{${ \ +2)T}$} \end{tabular} & \begin{tabular}[c]{@{}l@{}} \tiny{$(M+N+ $} \\ \tiny{$ \ W+2)T$} \end{tabular} & \begin{tabular}[c]{@{}l@{}} \tiny{$(M+N+$} \\ \tiny{$ \ W+2)T$} \end{tabular} \\ \hline
\begin{tabular}[c]{@{}l@{}}\scriptsize{\textbf{Power balance}}\\ \textbf{constraint}\end{tabular} & same with \textbf{P1} & \begin{tabular}[c]{@{}l@{}}independent of each \\ other island part\end{tabular} & \begin{tabular}[c]{@{}l@{}}\textbf{coupled} between\\ some island parts\end{tabular} \\ \hline
\end{tabular}
\end{table}

The OPMS problems for the three fault-modes are different with each other in objective function, control variables, and power balance constraints. The differences are summarized in Table \ref{tab:diff_three_modes}. 
There are $2X$ variables ($\bm S_{{\rm{P}},x}(t)$ and $\bm S_{{\rm{S}},y}(t)$) in semi-island mode more than that in other modes. There are also $W-1$ variables ($\bm {\rho}_w(t)$) more than that in normal mode. Constraint (\ref{eqn:DC_output_load_fault_semi}) is coupled between island parts in semi-island mode.
Thus, the OPMS problem for the semi-island mode is the most complex one.
Therefore, it is selected as the representative OPMSF problem for further analysis in the following subsection, which is formulated as:
\begin{equation*}
\begin{aligned}
\label{eqn:control_variable2}
\textbf{P2}:
\underset{ {\bm u}_w(t) }\min \; & \sum\nolimits_{w \in \mathcal{W}} \sum\nolimits_{t \in \mathcal{T}} C_{w}(t) \\
\text{s.t.} \quad &  (\ref{eqn:PG_limit})-(\ref{eqn:e_capacity_range}),
(\ref{eqn:total_load}), (\ref{eqn:redundant_switches}), (\ref{eqn:redundant_switches_time1}), (\ref{eqn:eeoi_limit}), (\ref{eqn:load_sheddding_penalty_xi}), (\ref{eqn:distance_ppr})-(\ref{eqn:ESM_island_relationship}).
\end{aligned}
\end{equation*}

\vspace{-3ex}
\subsection{Problem Transformation}
%

It is difficult to solve \textbf{P2} in four points: travel constrain (\ref{eqn:distance_ppr}) is non-convex; EEOI limitation is a fractional form; the feasibility is affected by faults; it is a MINLP mid-time scheduling problem. Firstly, the constraint (\ref{eqn:distance_ppr}) and EEOI limitation are transformed into convex forms.
Secondly, considering that the relaxed problem is infeasible, a further relaxation of travel constraint is developed to guarantee the feasibility. A sufficient condition is provided for that if \textbf{P2} is feasible, the two-step relaxed problem has the same optimal solution; if not, the maximum travel distance can be further obtained.

\subsubsection{Non-convex Form Transformation}
The OPMSF problem is non-convex due to constraint (\ref{eqn:eeoi_limit}) and (\ref{eqn:distance_ppr}) with $\beta = 3$.
$F{(P_{\rm{g}, m}(t))}$ is a convex form, and $V(t)$ is a concave form in $( \bm P_{{\rm{g}},m}(t), \bm P_{{\rm{e}}, n}(t), \bm P_{{\rm{pr}},r}(t) )$. 
Thus, an equivalent convex form of (\ref{eqn:eeoi_limit}) is obtained as:
\begin{equation}
\label{eqn:eeoi_limit_transform}
\begin{aligned}
{\sum\nolimits_{ g \in \mathcal{G} } F{(P_{\rm{g}, m}(t))}} - {\rm{EEOI}}^{\max} {F_{\rm{sl}} V(t)  \Delta t } \leqslant 0,
\end{aligned} 
\end{equation}

Then, based on convex relaxation, (\ref{eqn:distance_ppr}) is transformed into:
\begin{align}
\label{eqn:distance_ppr_relax}
& {D} - \sum\nolimits_{t \in \mathcal{T}} ( P_{\rm{PR}}{(t)} / \alpha )^{1/\beta} \Delta t \leqslant 0.
\end{align}

At last, \textbf{P3} is defined as \textbf{P2} with (\ref{eqn:eeoi_limit_transform}) and (\ref{eqn:distance_ppr_relax}) instead of  (\ref{eqn:eeoi_limit}) and (\ref{eqn:distance_ppr}). 
\begin{theorem}
\label{th:theorem 2}
The relaxed problem \textbf{P3} is exact, i.e., an optimal solution of \textbf{P3} is also optimal for the problem \textbf{P2}, if its optimal solutions satisfy (\ref{eqn:distance_ppr}).
\end{theorem}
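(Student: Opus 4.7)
The plan is to use a standard tight-relaxation argument: show that the feasible set of \textbf{P3} contains that of \textbf{P2}, so $\mathrm{OPT}(\textbf{P3}) \leq \mathrm{OPT}(\textbf{P2})$; then argue that any optimal $\bm u^*$ of \textbf{P3} which additionally satisfies the original equality (\ref{eqn:distance_ppr}) is feasible for \textbf{P2}, pinning the two optimal values together and certifying $\bm u^*$ as optimal for \textbf{P2}.

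First I would verify that the EEOI reformulation from (\ref{eqn:eeoi_limit}) to (\ref{eqn:eeoi_limit_transform}) is an \emph{equivalence} rather than a relaxation. Since the ship speed is bounded below by $V^{\min} > 0$ by (\ref{eqn:speed}), and $\Delta t$ and $F_{\rm{sl}}$ are positive, multiplying both sides of (\ref{eqn:eeoi_limit}) by the strictly positive quantity $F_{\rm{sl}} V(t) \Delta t$ preserves the direction of the inequality. Hence (\ref{eqn:eeoi_limit}) and (\ref{eqn:eeoi_limit_transform}) cut out identical sets of feasible points, and the only substantive difference between \textbf{P2} and \textbf{P3} is the replacement of (\ref{eqn:distance_ppr}) by (\ref{eqn:distance_ppr_relax}).

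Next I would compare those two constraints. The former is the equality $\sum_t (P_{\rm{PR}}(t)/\alpha)^{1/\beta} \Delta t = D$, while the latter is the one-sided inequality $\sum_t (P_{\rm{PR}}(t)/\alpha)^{1/\beta} \Delta t \geq D$. Every point feasible for \textbf{P2} automatically satisfies this inequality, so the feasible set $\mathcal{F}_{\textbf{P2}}$ is contained in $\mathcal{F}_{\textbf{P3}}$. Because the two problems share the same objective, this containment yields
\begin{equation*}
\mathrm{OPT}(\textbf{P3}) \leq \mathrm{OPT}(\textbf{P2}).
\end{equation*}

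Finally, let $\bm u^*$ be any optimal solution of \textbf{P3} and invoke the hypothesis that $\bm u^*$ satisfies (\ref{eqn:distance_ppr}) with equality. Then $\bm u^* \in \mathcal{F}_{\textbf{P2}}$, so $\mathrm{OPT}(\textbf{P2}) \leq \sum_{w,t} C_w(\bm u^*) = \mathrm{OPT}(\textbf{P3})$. Combining with the previous bound gives $\mathrm{OPT}(\textbf{P2}) = \mathrm{OPT}(\textbf{P3})$, and $\bm u^*$ attains this common value on $\mathcal{F}_{\textbf{P2}}$, proving it is optimal for \textbf{P2}. The only subtle point is checking the sign conditions that make the EEOI step a genuine equivalence; beyond that the proof is a short set-inclusion chase and carries no real obstacle.
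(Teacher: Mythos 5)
Your proof is correct for the theorem as literally stated, but it takes a genuinely different route from the paper's. You take the hypothesis --- that the optimal solution of \textbf{P3} satisfies the travel equality (\ref{eqn:distance_ppr}) --- at face value and run the standard tight-relaxation argument: the feasible set of \textbf{P2} is contained in that of \textbf{P3} (equality implies the one-sided inequality (\ref{eqn:distance_ppr_relax}), and the EEOI rewrite is an equivalence), so $\mathrm{OPT}(\textbf{P3}) \leq \mathrm{OPT}(\textbf{P2})$, and feasibility of the optimum for \textbf{P2} closes the gap. The paper instead devotes its entire proof to establishing the hypothesis: it argues by contradiction that any optimal solution of \textbf{P3} must satisfy (\ref{eqn:distance_ppr}) with equality, since if the traveled distance strictly exceeded $D$ one could decrease $P_{{\rm{pr}},r}(t)$ and $P_{{\rm{g}},m}(t)$ by a small $\varepsilon > 0$, remain feasible, and strictly reduce the objective; the set-inclusion half that you spell out is left implicit there. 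The two arguments are complementary: yours certifies that tightness implies optimality for \textbf{P2} (which is all the conditional statement demands), while the paper's perturbation argument explains why tightness actually holds at the optimum, which is the substantive content that makes the relaxation useful downstream. Your flagged subtlety about the EEOI step is also well taken: the equivalence of (\ref{eqn:eeoi_limit}) and (\ref{eqn:eeoi_limit_transform}) requires $V(t) > 0$, whereas the paper's own simulations set $V^{\min} = 0$ in (\ref{eqn:speed}), so that sign condition deserves the explicit attention you give it and receives none in the paper.
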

\begin{proof}
 Please see Appendix \ref{append:theorem2}.
\end{proof}

\subsubsection{Feasibility-guaranteed Relaxation}
Considering that there may be no feasible solution of \textbf{P3} caused by faults, a further relaxed problem \textbf{P4} is formulated as:
\begin{align}
\textbf{P4}:
\underset{ {\bm u}_w(t), D_{\rm{d}} }\min \; & \sum\nolimits_{w \in \mathcal{W}} \sum\nolimits_{t \in \mathcal{T}} C_{w}(t) + h D_{\rm{d}} \nonumber \\
\label{eqn:relaxed_cons_P5} \text{s.t.} \quad & {D} - \sum\nolimits_{t \in \mathcal{T}} ( P_{\rm{PR}}{(t)} / \alpha )^{1/\beta} \Delta t \leqslant D_{\rm{d}}, \\
&  (\ref{eqn:PG_limit})-(\ref{eqn:e_capacity_range}),
(\ref{eqn:total_load}), (\ref{eqn:redundant_switches}), (\ref{eqn:redundant_switches_time1}), (\ref{eqn:load_sheddding_penalty_xi}), (\ref{eqn:range_ppr1})-(\ref{eqn:ESM_island_relationship}), (\ref{eqn:eeoi_limit_transform}) \nonumber
\end{align}
where $D_{\rm{d}}$ in (\ref{eqn:relaxed_cons_P5}) is the reduced travel distance that is a positive variable in \textbf{P4}, and $h$ denotes the penalty parameter of the reduced distance. To guarantee that $D_{\rm{d}} > 0$ only if \textbf{P3} is infeasible, a sufficient condition is derived as below.

\begin{proposition}
If \textbf{P3} has feasible solutions, the optimal solution of \textbf{P4} is also the optimal solution of \textbf{P3} when the penalty $h$ satisfies:
{{\begin{equation}
\label{eqn:parameter_condition}
h   > \dfrac {\beta \alpha^{1/\beta} \xi_{\rm{l}} } {\left( P_{\rm{PR}}^{\max} \right)^{1/\beta-1} }.
\end{equation}}}

Additionally, if \textbf{P3} has no feasible solution, $D-D_{\rm{d}}^{*}$ is the maximum travel distance that can be achieved in time $T$. $D_{\rm{d}}^{*}$ is optimal reduced distance that is obtained from \textbf{P4} with (\ref{eqn:parameter_condition}).
\end{proposition}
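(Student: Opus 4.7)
The plan is to establish both claims by a perturbation argument on the reduced-distance slack $D_{\rm d}$, leveraging Proposition 1 to cap at $\xi_{\rm l}$ the marginal cost of supplying one extra unit of DC power.

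For the first claim, I would start by observing that any \textbf{P3}-feasible $\bm u_w$ paired with $D_{\rm d}=0$ is also \textbf{P4}-feasible at the identical objective, so the \textbf{P4} optimum $V_4$ satisfies $V_4\leq V_3$. It then suffices to show $D_{\rm d}^*=0$ at every \textbf{P4} optimum $(\bm u_w^*, D_{\rm d}^*)$. Assume for contradiction $D_{\rm d}^*>0$. Constraint (\ref{eqn:relaxed_cons_P5}) must be tight at the optimum, so $\sum_t (P_{\rm PR}^*(t)/\alpha)^{1/\beta}\Delta t = D-D_{\rm d}^*<D\leq V^{\max}T$, forcing some slot $t_0$ with $P_{\rm PR}^*(t_0)<P_{\rm PR}^{\max}$. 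Raising $P_{\rm PR}(t_0)$ by a small $\Delta P$ relaxes (\ref{eqn:relaxed_cons_P5}) by $\tfrac{1}{\beta\alpha^{1/\beta}} P_{\rm PR}^*(t_0)^{1/\beta-1}\Delta P\,\Delta t$, which permits an equal decrease of $D_{\rm d}$. The extra demand $\Delta P$ at $t_0$ can be absorbed by (i) additional generation, (ii) additional ESM discharge, or (iii) additional non-vital shedding, each of which has marginal cost at most $\xi_{\rm l}$ per unit power by Proposition 1 and (\ref{eqn:load_sheddding_penalty_xi}).

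Combining the two effects, the net change in the \textbf{P4} objective is bounded above by
\begin{equation*}
\Bigl(\xi_{\rm l} - \tfrac{h}{\beta\alpha^{1/\beta}}P_{\rm PR}^*(t_0)^{1/\beta-1}\Bigr)\Delta P\,\Delta t .
\end{equation*}
Because $1/\beta-1<0$ when $\beta=3$, the factor $P_{\rm PR}^*(t_0)^{1/\beta-1}$ is minimised at $P_{\rm PR}^{\max}$, so condition (\ref{eqn:parameter_condition}) is exactly what makes this bracket strictly negative for every admissible $t_0$. The perturbation then strictly improves the \textbf{P4} objective, contradicting optimality, and I conclude $D_{\rm d}^*=0$; then $V_3\leq V_4\leq V_3$ forces \textbf{P3}-optimality of $\bm u_w^*$. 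For the second claim, when \textbf{P3} is infeasible every \textbf{P4}-feasible point has $D_{\rm d}>0$, so $\bm u_w^*$ certifies that the distance $D-D_{\rm d}^*$ is attained within $T$ under all physical constraints; a plan reaching $D-D_{\rm d}'$ with $D_{\rm d}'<D_{\rm d}^*$ would be \textbf{P4}-feasible and, by the same negative bracket, strictly cheaper---contradicting the optimality of $D_{\rm d}^*$, hence $D-D_{\rm d}^*$ is the maximum achievable travel distance in $T$.

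The hardest step will be defending the $\xi_{\rm l}$ marginal-cost bound against the time-coupling constraints (ramp rate, minimum uptime, ESM energy balance, EEOI). I would circumvent this by choosing $\Delta P$ small enough that every strictly-slack constraint remains slack, and by noting that route (iii), being purely local in time, already attains the stated bound $\xi_{\rm l}\Delta P\Delta t$ on its own; the only delicate edge case is when all non-vital load at $t_0$ is already shed, but then routes (i)--(ii) must supply $\Delta P$ at marginal cost strictly below $\xi_{\rm l}$ by Proposition 1, so the bracket stays strictly negative and the argument goes through unchanged.
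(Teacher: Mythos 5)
Your argument rests on the same two estimates as the paper's: the sensitivity of $\sum_t (P_{\rm PR}(t)/\alpha)^{1/\beta}\Delta t$ to propulsion power is at least $\tfrac{1}{\beta\alpha^{1/\beta}}(P_{\rm PR}^{\max})^{1/\beta-1}$ because $1/\beta-1<0$, and the marginal cost of one extra unit of DC power is at most $\xi_{\rm l}$ by Proposition 1; condition (\ref{eqn:parameter_condition}) is exactly the comparison of these two rates, and your handling of the infeasible case matches the paper's appeal to Theorem \ref{th:theorem 2}. The difference is that the paper runs the comparison \emph{globally} --- it takes the reduced-distance solution and the assumed \textbf{P3}-feasible schedule, applies the Lagrange mean value theorem to the distance gap, converts $\sum_t \Delta P_{\rm PR}(t)$ into $\Delta P_{\rm G}+\Delta P_{\rm E}+\Delta P_{\rm no}$ via the power balance, and bounds the whole cost saving by $\xi_{\rm l}$ times that sum --- whereas you perturb a single slot $t_0$ of the \textbf{P4} optimum. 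The local version has one failure mode you do not cover: at every slot with $P_{\rm PR}^*(t_0)<P_{\rm PR}^{\max}$, generation, ESM discharge \emph{and} non-vital shedding may all be simultaneously saturated (the spare capacity needed to raise $P_{\rm PR}$ may exist only by committing an additional generator, which is a discrete move, not a small $\Delta P$), in which case no admissible perturbation exists and the contradiction never gets started; your edge-case discussion only treats the situation where shedding alone is exhausted. The global comparison against the \textbf{P3}-feasible witness sidesteps this because it never needs a feasible ascent direction at the \textbf{P4} optimum, so if you keep the perturbation style you should fall back to that direct comparison in the fully saturated corner case.
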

\begin{proof}
 Please see Appendix \ref{append:theorem3}.
\end{proof}

\begin{remark}
If there is not any port for repair and maintenance in the range of $D-D_{\rm{d}}^*$, the information $D-D_{\rm{d}}^*$ can be sent to the nearest port for rescue in advance.
\end{remark}

\vspace{-0ex}
\section{Power Management Algorithm Design}
\label{sec:algorithm_design}
An {optimal management algorithm} is designed based on BD \cite{floudas1995nonlinear,Sifuentes2007Hydrothermal,Network2016Nasri} to solve the \textbf{P4} over the entire time domain $\mathcal{T}$.
By the problem transformation in the subsection II-B, \textbf{P4} is a convex problem when the integer variables are determined. Thus, the {optimal management algorithm} based on BD can obtain the optimal solution since the sufficient condition for convergence to the global optimum is that the functions in the optimization problem satisfy some form of convexity conditions \cite{grossmann1990mixed}.
\textbf{P4} is decomposed into nonlinear programming (NLP) problem (subproblem) with fixed integer variables, and an integer linear programming (ILP) problem (master problem). 
The subproblem deals with the continuous variables and generates a set of dual variables to add Benders cuts in the master problem. After solving the master problem with Benders cuts, the optimal integer solution in this iteration is passed to subproblem. Hence, the subproblem and master problem are calculated iteratively to obtain the final optimal solution. 
Then, due to the slow convergence caused by the time-couping\cite{Sifuentes2007Hydrothermal}, 
an LNBD is proposed by decomposing the time-coupled constraints in the subproblem and adding the accelerating constraints in the master problem.

\vspace{-1ex}
\subsection{ Optimal Management Algorithm}
Based on the predicted data, the {optimal management algorithm} solves \textbf{P4} over the entire time domain $\mathcal{T}$. 
The subproblem and the master problem decomposed by BD are described as

\subsubsection{\textbf{Subproblem}}
the subproblem is defined as
\begin{subequations}
\label{eqn:subproblem_benders_p5}
\begin{align}
\textbf{P5}: 
\underset{ \bm u_{{\rm{sp}},w}^{(k)}(t)}
\min & C_{\rm{sp}}^{(k)} = \sum_{t \in \mathcal{T}} \sum_{w \in \mathcal{W}} \bigg( \sum_{m \in \mathcal{M}_w} \bigg( a_{{\rm{g}},m}  P_{{\rm{g}},m}^{(k)}(t)^2 \Delta t \bigg. \bigg. \nonumber \\ 
 & \bigg. \bigg. + b_{{\rm{g}},m}  P_{{\rm{g}},m}^{(k)}(t) \Delta t \bigg) + \xi_{\rm{e}} \sum_{n \in \mathcal{N}_w} C_{\rm{E}}(P_{{\rm{e}}, n}^{(k)}(t))  \nonumber \\
 & \bigg. \bigg. + \xi_{\rm{l}} \sum_{w \in \mathcal{W}} C_{\rm{L}}(\rho_w^{(k)}(t)) \bigg) + hD_{\rm{d}}^{(k)} \nonumber \\
\text{s.t.} \quad 
\label{eqn:delta_bd} & \bm \delta_{{\rm{g}},m}(t) = \bm \delta_{{\rm{g}},m}^{(k-1)}(t) : \bm \lambda_{\delta,m}(t), \\
\label{eqn:sp_bd} & \bm{S}_{{\rm{P}}, x}(t) = \bm{S}_{{\rm{P}}, x}^{(k-1)}(t) : \bm \lambda_{{\rm{P}},x}(t), \\ 
\label{eqn:ss_bd} & \bm{S}_{{\rm{S}}, y}(t) = \bm{S}_{{\rm{S}}, y}^{(k-1)}(t) : \bm \lambda_{{\rm{S}},y}(t), \\
&  (\ref{eqn:PG_limit}), (\ref{eqn:P_Rate_limit}), (\ref{eqn:pe_range}), (\ref{eqn:e_capacity_range}), (\ref{eqn:total_load}), (\ref{eqn:load_sheddding_penalty_xi}), (\ref{eqn:range_ppr1})-(\ref{eqn:eeoi_limit_transform}), 
(\ref{eqn:relaxed_cons_P5}), (\ref{eqn:parameter_condition}). \nonumber
\end{align}
\end{subequations}

The control vector of the subproblem $\bm u_{{\rm{sp}},w}^{(k)}(t)$ in each part at time $t$ can be represented by
\begin{equation*}
\begin{aligned}
\label{eqn:variable_in_sumproblem}
\bm u_{{\rm{sp}},w}^{(k)}(t) \stackrel{\vartriangle}{=} & (\bm P_{{\rm{g}},m}^{(k)}(t), \bm P_{{\rm{pr}},r}^{(k)}(t), \bm P_{{\rm{e}}, n}^{(k)}(t), \rho_w^{(k)}(t), D_{\rm{d}}^{(k)} ),  \big. \\
& \big. m \in \mathcal{M}_{w}, n \in \mathcal{N}_{w}, r \in \mathcal{R}_{w}, \forall w, \forall t.
\end{aligned}
\end{equation*}

$\bm \delta_{{\rm{g}},m}^{(k-1)}(t)$,  $\bm{S}^{(k-1)}_{{\rm{P}}, x}(t)$, $\bm{S}^{(k-1)}_{{\rm{S}}, y}(t)$, and $ \bm P_{{\rm{g}},m}^{(k-1)}(t)$ are fixed as the input data to the subproblem which is computed in the master problem.
$\bm \delta_{{\rm{g}},m}(t)$ controls the constraints (\ref{eqn:PG_limit}) and (\ref{eqn:start-up_detect}).
$\bm{S}_{{\rm{P}}, x}(t)$ and $\bm{S}_{{\rm{S}}, y}(t)$ affect the constraint (\ref{eqn:redundant_switches}), (\ref{eqn:redundant_switches_time1}), and (\ref{eqn:DC_output_load_fault_semi}).
$\bm \lambda_{\delta,m}(t)$, 
$\bm {\lambda}_{{\rm{P}},x}(t)$, $\bm {\lambda}_{{\rm{S}},y}(t) $, and 
$\bm \lambda_{{\rm{g}},m}(t)$ are defined as $\bm \theta(t)$, which is a set of dual variables of $ \bm u_{{\rm{mp}},w}(t) $. $\bm \theta(t)$ provides sensitivties to be used in constructing Benders' cut $\mu^{(k)}$ for the master problem.
The upper bound for the optimal objective value of \textbf{P4} at iteration $k$ is calculated by
\begin{equation}
\label{eqn:benders_up}
\begin{aligned}
\overline C^{(k)} & = 
\sum_{t \in \mathcal{T}} 
\sum_{m \in \mathcal{M}} c_{{\rm{g}},m} \bigg( \delta_{{\rm{g}},m}^{(k-1)}(t) \Delta t \bigg)
 + C_{\rm{sp}}^{(k)}.
 \end{aligned}
\end{equation}

\subsubsection{\textbf{Master problem}}
The master problem includes the minimization of the third term of the fuel consumption cost of generators and benders‘ cut. Two constraints are added to improve the convergence, $\forall w \in \mathcal{W}, \forall t \in \mathcal{T} $:
\begin{align}
\label{eqn:accele_max}& P_{{\rm{G}},w}^{\max} + P_{{\rm{E}},w}^{\max} \geqslant P_{{\rm{PR}},w}^{(v)}(t) + P_{{\rm{VS}},w}(t) , \\
\label{eqn:accele_min}& P_{{\rm{G}},w}^{\min} + P_{{\rm{E}},w}^{\min} \leqslant 
P_{{\rm{VS}},w}(t)+ P_{{\rm{NO}},w}(t) , \\
\label{eqn:lower_bdm} & \mu^{(k)}  \geqslant \underline{\mu}^{}, 
\end{align}
where the term in (\ref{eqn:accele_max}) is defined as $P_{{\rm{G}},w}^{\max} = \sum_{m \in \mathcal{M}_{w}} \delta_{{\rm{g}},m}(t)P_{{\rm{g}},m}^{\max} $ and the term in (\ref{eqn:accele_min}) $ P_{{\rm{G}},w}^{\min} = \sum_{m \in \mathcal{M}_{w}} \delta_{{\rm{g}},m}(t)P_{{\rm{g}},m}^{\min} $. Constraints (\ref{eqn:accele_max}) and (\ref{eqn:accele_min}) guarantee that the range of the generation power covers the power of load demand at each time $t$. Consequently, a part of invalid solutions can be eliminated. 
Constraint (\ref{eqn:lower_bdm}) gives a lower bound on $\mu^{(k)}$ to avoid the search for the invalid solution below that bound.
Hence, the master problem can be described as:
\begin{subequations}
\begin{align}
\textbf{P6}: \underset{\bm u_{{\rm{mp}},w}^{(k)}(t)} \min &  { \underline C^{(k)} = 
\sum_{t \in \mathcal{T}} 
\sum_{w \in \mathcal{W}}  \sum_{m \in \mathcal{M}_w} \bigg( c_{{\rm{g}},m} \delta_{{\rm{g}},m}^{(k)}(t) \Delta t \bigg)
+  \mu^{(k)}} \nonumber \\
\text{s.t.} \quad & \mu^{(k)} \geqslant  \sum_{m \in \mathcal{M} } \sum_{t \in \mathcal{T} } \lambda_{\delta,m}(t) \left( \delta_{{\rm{g}},m}^{(k)}(t) -  \delta_{{\rm{g}},m}^{(v)}(t) \right) \nonumber \\
& \quad \quad  + \sum_{x \in \Omega_{{\rm{pb}},x} } \sum_{t \in \mathcal{T} } \lambda_{{\rm{P}},x}(t) \left( {S}_{{\rm{P}}, x}^{(k)}(t) - {S}_{{\rm{P}}, x}^{(v)}(t) \right) \nonumber \\
& \quad \quad  + \sum_{y \in \Omega_{{\rm{sb}},y} } \sum_{t \in \mathcal{T} } \lambda_{{\rm{S}},y}(t) \left( {S}_{{\rm{S}}, y}^{(k)}(t) - {S}_{{\rm{S}}, y}^{(v)}(t) \right)  \nonumber \\
\label{eqn:cut_bdm} & \quad \quad + C_{\rm{sp}}^{(v)}; v \in [ 1, \cdots, k-1 ],\\
& (\ref{eqn:start-up_detect}), (\ref{eqn:min_time_operation}), (\ref{eqn:redundant_switches}), (\ref{eqn:redundant_switches_time1}), (\ref{eqn:accele_max})-(\ref{eqn:lower_bdm}). \nonumber 
\end{align}
\end{subequations}

$\underline C^{(k)}$ is the lower bound for the objective value of \textbf{P4}. In each iteration, $\underline C^{(k)}$ is improved by the Benders' cut $\mu^{(k)}$. The control vector of the master problem $\bm u_{{\rm{mp}},w}(t)$ in each island part at time $t$ can be represented as
\begin{equation*}
\begin{aligned}
\label{eqn:variable_in_master_problem}
\bm u_{{\rm{mp}},w}^{(k)}(t) \stackrel{\vartriangle}{=} & ( \bm \delta_{{\rm{g}},m}^{(k)}(t), \bm y_{{\rm{g}},m}^{(k)}(t),
\bm{S}_{{\rm{P}}, x}^{(k)}(t), \bm{S}_{{\rm{S}}, y}^{(k)}(t), \\
& \bm y_{s,x}^{(k)}(t) ), m \in \mathcal{M}_{w},   x \in \Omega_{{\rm{pb}},w}, y \in \Omega_{{\rm{sb}},w}.
\end{aligned}
\end{equation*}

The {optimal management algorithm} based on BD converges to the optimal solution by iteratively calculating \textbf{P6} and \textbf{P5}. {The iteration flowchart is shown in Fig. \ref{fig:algorithm_structure}. }

\begin{figure}[ht]
\begin{center}
\vspace{-2ex}
\includegraphics[width= 0.99 \linewidth]{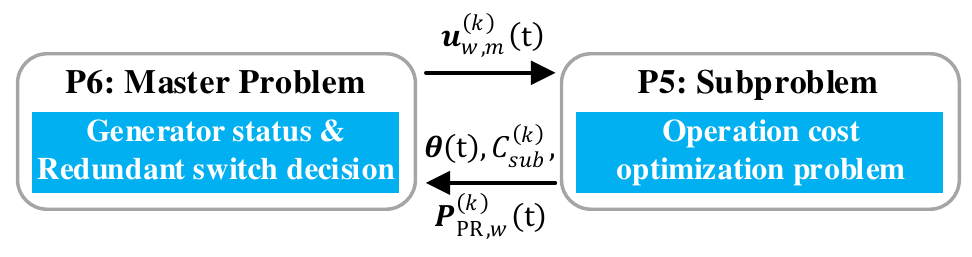} 
\caption{ {Flowchart of optimal management algorithm.}}
\captionsetup{justification=centering}
\label{fig:algorithm_structure}
\vspace{-3ex}
\end{center}  
\end{figure}


\subsection{Low-complexity Near-optimal Algorithm}
Since the {optimal management algorithm} is exponential in $T$, an LNBD is developed to deal with the slow convergence.

\subsubsection{\textbf{Subproblem}} 
Here it is needed to decompose the ESM and travel constraints in (\ref{eqn:ESM_island_relationship}) and (\ref{eqn:distance_ppr_relax}).
Thus, the power of ESMs are allocated by adjusting the upper bound based on the estimated average load demand $\bar P_{{\rm{L}},w}(t)$ and average power of ESMs. 
If $\sum_{m \in \mathcal{M}_{w}} \delta{{\rm{g}},m}(t) >0$, (\ref{eqn:ESM_island_relationship}) is replaced by
\begin{equation}
\begin{aligned}
\label{eqn:P_E_online}
P_{{\rm{E}},w}(t) \leqslant & \bar E_{w} + \varphi(t) \Delta P_{{\rm{L}},w}(t), \text{if} \ \sum\nolimits_{m \in \mathcal{M}_{w}} \delta{{\rm{g}},m}(t) >0; \\
\end{aligned}
\end{equation}
where $ \bar E_{w}^{} = ( E_{w} - E_{{\rm{e}},w}^{\min})/T$,
otherwise, there is no other additional constraint. $\Delta P_{{\rm{L}},w}(t) = P_{{\rm{L}},w}(t) - \bar P_{{\rm{L}},w}$, and $ \varphi(t) \in [0,1] $ denotes the parameter for adjusting the power of ESMs. Besides, based on the $\bar P_{{\rm{L}},w}(t)$, (\ref{eqn:distance_ppr_relax}) can be decomposed into
\begin{equation}
\begin{aligned}
\label{eqn:online_distance}
& D(t) + D_{\rm{d}}(t) - \left( {P_{\rm{PR}}(t)} /{\alpha} \right)^{{1}/{\beta}} \Delta t \leqslant 0, \\
\end{aligned}
\end{equation}
where $D(t)$ is calculated by $ D(t) = \left( ({ \bar P_{\rm{PR}} - (1 - \varphi(t)) \Delta P_{{\rm{L}},w}(t) }) /{\alpha} \right)^{\frac{1}{\beta}} \Delta t $.  
It has to satisfy $\varphi (T) = 1$ to complete the voyage.
When $D_{\rm{d}}(t)>0$ and $D_{\rm{d}}(t)=0$ all exist, $D_{\rm{d}}(t)>0$ shows that it cannot achieve $D(t)$, and $D_{\rm{d}}(t)=0$ means the power at this time is enough to finish $D(t)$. Thus, in this case $D(t)$ is updated as $D(t) + \sum_{t\in \mathcal{T} }D_{\rm{d}}(t)/T$.
Thus, the subproblem is redefined as
\begin{equation*}
\begin{aligned}
\textbf{P7}:
\underset{ \bm u_{{\rm{sp}},w}^{(k)}(t)} \min 
& C_{\rm{sp}}^{(k)}  = \sum_{w \in \mathcal{W}}  \bigg( \sum_{m \in \mathcal{M}_w} a_{{\rm{g}},m} P_{{\rm{g}},m}^{(k)}(t)^2 \Delta t  \bigg. \\ 
 & \quad \quad \bigg. \bigg. + b_{{\rm{g}},m}  P_{{\rm{g}},m}^{(k)}(t) \Delta t + \xi_{\rm{e}} \sum_{n \in \mathcal{N}_w} C_{\rm{E}}(P_{{\rm{e}}, n}^{(k)}(t))  \nonumber \\
 & \quad \quad \bigg. \bigg. + \xi_{\rm{l}}  C_{\rm{L}}(\rho_w^{(k)}(t)) \bigg) + hD_{\rm{d}}(t)^{(k)} \nonumber \\
\text{s.t.} \quad &  (\ref{eqn:PG_limit}), (\ref{eqn:P_Rate_limit}), (\ref{eqn:pe_range}), (\ref{eqn:e_capacity_range}), (\ref{eqn:total_load}), (\ref{eqn:load_sheddding_penalty_xi}), (\ref{eqn:range_ppr1})-(\ref{eqn:ESM_island_range_1}), (\ref{eqn:eeoi_limit_transform}),  \\
&  (\ref{eqn:relaxed_cons_P5})-(31), (\ref{eqn:P_E_online}),  (\ref{eqn:online_distance}).
\end{aligned}
\end{equation*}

\subsubsection{\textbf{Master problem}}
Thus, the master problem is redefined as:
\begin{equation*}
\begin{aligned}
\label{eqn:masterproblem_benders_online}
\textbf{P8}:
\underset{ \bm u_{{\rm{mp}},w}(t)} \min \; & { \underline C^{(k)}(t) = 
\sum_{t \in \mathcal{T}} \sum_{m \in \mathcal{M}} \bigg( c_{{\rm{g}},m} \delta_{{\rm{g}},m}^{(k-1)}(t) \Delta t \bigg) 
+ \mu^{(k)} } \\
\text{s.t.} \quad
& (\ref{eqn:start-up_detect}), (\ref{eqn:min_time_operation}), (\ref{eqn:redundant_switches}), (\ref{eqn:redundant_switches_time1}), (\ref{eqn:cut_bdm}), (\ref{eqn:accele_max})-(\ref{eqn:lower_bdm}).
\end{aligned}
\end{equation*}

The procedure of the LNBD is given in Algorithm \ref{alg:offline}.

%
\begin{algorithm}[htb]  
  \caption{LNBD Algorithm for OPMSF problem}
  \label{alg:offline}
  \LinesNumbered
  \KwIn{    $ \overline C = -\infty $, $ \underline C = + \infty $, and $\epsilon = 10^{-2}$\;}
  \KwOut{ $ \bm u_{{\rm{sp}},w}(t) $, $ \bm u_{{\rm{mp}},w}(t) $, $\rho_{w}$, and $D_{\rm{d}}$\;}
  Set $k \leftarrow 1$ \;   
  \Repeat {$ \overline C-\underline C < \epsilon $}
  {  
    Obtain $\bm u_{{\rm{mp}}, w}^{(k)}(t)$ and $\underline C^{(k)} $ by solving \textbf{P8}\;
    \If {$\underline C^{(k)} > \underline C $}
    {
      Set $\underline C = \underline C^{(k)} $\;
    } 
    \Repeat {each $D_{\rm{d}}(t) = 0$ or each $D_{\rm{d}}(t) > 0$}
    {  
      Update $D_{\rm{d}}(t)$ and solve $T$ \textbf{P7} problems in sequence based on $\bm u_{{\rm{mp}},w}^{(k)}(t) $\;
    } 
    Obtain $\bm u_{{\rm{sp}},w}^{(k)}(t) $ and $\bm \theta^{(k)}$ by solving \textbf{P7}\;
    Generate cut (\ref{eqn:cut_bdm}) based on $\bm u_{{\rm{sp}},w}^{(k)}(t) $ and $\bm \theta^{(k)}$, and add it into \textbf{P8}\;
    Calculate the upper bound $\overline C^{(k)}$ of \textbf{P4} by (\ref{eqn:benders_up})\;
    Set $k \leftarrow k+1 $\;
    \If {$\overline C^{(k)} < \overline C $}
    {  
      Set $\overline C = \overline C^{(k)} $\;
    }
  }
\end{algorithm}

%
%

\subsection{Complexity Analysis}
\label{sec:complexity_analysis}
Since the subproblems and the master problems are convex NLP problems and ILP problems, respectively, they are much easier to solve than the MINLP problem \textbf{P4}.
To compare the performance of two algorithms, computational complexity needs to be addressed.
Starting from the {optimal management algorithm} based on BD, the subproblem can be solved in polynomial time \cite{nemirovski2004interior}, and the computational complexity of subproblem is $\mathcal{O}( T_{\rm{sp}})$, where $T_{\rm{sp}}$ denotes the number of iterations required in the subproblem \textbf{P5} and $\mathcal{O}(\cdot)$ is the big-$\mathcal{O}$ notation.
In MVDC SPS, the startup time typically ranges from one to five minutes \cite{ieeestd4512017}. The minimum operation time $T_{{{m}}}^{\min}$ is determined by the startup time that brings generator online. $T_{{{m}}}^{\min}$ is larger than the time interval $\Delta t$ that is 0.5 or 1 hour in \cite{kanellos2014optimal,Kanellos2016Smart,Kanellos2017cost,Kanellos2014Optimaldemand,shang2016economic}. 
Thus, $T_{{{m}}}^{\min}$ is set to $\Delta t$. Based on that, the ILP master problem has a non-polynomial complexity, and its computational complexity is $\mathcal{O}(2^{(M+X)T})$.
Considering the iteration in BD, the overall algorithm complexity is $\mathcal{O}( (T_{\rm{sp}} + 2^{(M+X)T} ) K )$, where $K$ represents the number of iterations required for BD to converge. 
$K$ is related to the strength of the Bender's cuts, and has a positive correlation with the complexity of master problem \cite{rahmaniani2017benders}.
It is observed that the {optimal management algorithm} based BD is exponential in $M$, $X$, and $T$. 
Due to the time decomposing in the LNBD, the overall algorithm complexity is reduced to $\mathcal{O}( ( \bar T_{\rm{sp}} + 2^{M+X} ) T \bar K )$, where $\bar K$ denotes the number of iterations required for LNBD to converge.
Thus, it is exponential in $M$ and $X$, and is polynomial in $T$. $\bar T_{\rm{sp}}$ denotes the number of iterations required in the subproblem \textbf{P7}.
Due to the limited generators and redundant switches in SPS, $M$ and $X$ are small, and then the computational complexity is mostly related to $T$. Hence, LNBD can be considered to have a polynomial time computational complexity.
\section{Simulations}
\label{sec:simulation}
In this section, an MVDC SPS and the simulation setup are described in detail. Then the proposed solutions are tested by applying it to the MVDC SPS.

\vspace{-2ex}
\subsection{Simulation Setup}
The subproblems are solved using the SDPT3 from the CVX package \cite{grant2008cvx}, which operates on an Intel CORE i5 3.4 GHz machine with 8 GB RAM. The master problem is solved by branch and cut.

\begin{figure}[ht]
\vspace{-2ex}
\begin{center}
\setlength{\belowcaptionskip}{-0.3cm}
\includegraphics[width= 0.47 \textwidth]{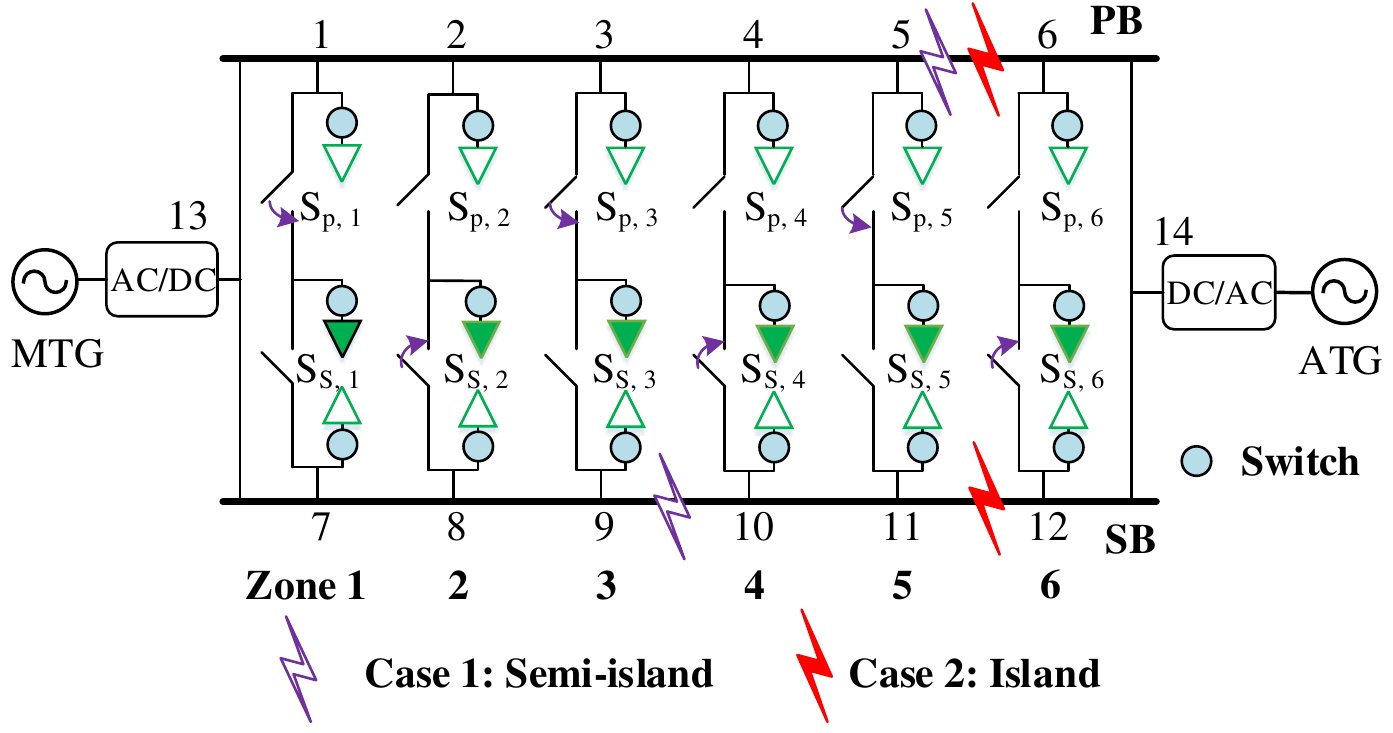} 
\caption{ MVDC SPS model and the fault scenario. } 
\captionsetup{justification=centering} 
\label{fig:fault_scenario}
\end{center}
\vspace{-2ex}
\end{figure}

\begin{table}[ht] \scriptsize
\centering
\vspace{-2ex}
\caption{Simulation parameters.}
\label{tab:Simulation_parameters}
\begin{tabular}{llll}
\hline \hline
\multicolumn{1}{c}{}   &  MTGs  & ATGs & ESMs \\ 
\hline
Normal power (MW)            & 8   &  4  & 0.5    \\
Technical maximum (p.u.)     & 1        & 1     & 1       \\
Technical minimum (p.u.)     & 0.15      & 0.1   & {0.1}       \\
Ramp rate limit (p.u./$\Delta t$) & $ \pm 90\% $ & $ \pm 90\% $ & $ \pm 100\% $ \\
\multirow{4}{*}{Cost function parameters} 
& {$a_{{\rm{g}},1}=13.5$}  
& $a_{{\rm{g}},2}=6$   
& $a_{\rm{lc}}=1$  \\
& $b_{{\rm{g}},1}=10$    
& $b_{{\rm{g}},2}=30$  & $c_{\rm{lc}}=0.5$ \\
& $c_{{\rm{g}},1}=300$   
& $c_{{\rm{g}},1}=250$ 
&      \\
& $T_{ 1}^{\min}=1$ 
& $T_{ 2}^{\min}=1$    
& $\xi_{\rm{e}} = 1$ \\ \hline
\multicolumn{1}{c}{}     & \multicolumn{2}{c}{Other Modules} 
& \multicolumn{1}{c}{Others}   \\ \hline
Speed \& power parameter & \multicolumn{1}{l}{$\alpha = 2.2e^{-3}$} 
& \multicolumn{1}{l}{$\beta = 3$}  
& \multicolumn{1}{l}{ $ T_{s}^{\min} = 1 $ } \\ 
Speed (kn)  
& \multicolumn{1}{l}{$V^{\min} = 0$} 
& \multicolumn{1}{l}{$V^{\max} = 17$} 
& \multicolumn{1}{l}{ $\Delta t$ = 1hour } \\ 
{EEOI}
& \multicolumn{2}{l}{${\rm{EEOI}}^{\max} = 23$ ${\rm{gCO_2}}/{\rm{tnkn}}$} 
& \multicolumn{1}{l}{ $F_{\rm{sl}}$ = 30} \\ 
{Penalty parameters}
&  $\xi_l = 265$ &  $h = 1.15e+3$
& \\ 
\hline \hline
\end{tabular}
\vspace{-2ex}
\end{table}



{\subsection{Case Study 1: Semi-island Mode}}
A MVDC SPS with $ Z = 6 $ and $T = 10 $ is shown in Fig. \ref{fig:fault_scenario}. 
There are one MTG in zone 1, one ATG in zone 6, and four ESMs in zone 1, 3, 4 and 6. 
The propulsion module is located in zone 2. 
The detail parameters are shown in Table \ref{tab:Simulation_parameters} that refers to \cite{kanellos2014optimal}.
The scenario is a 10-h voyage with $D = 120$ nm. 
A fault scenario that belongs to semi-island mode occurs at the begin as shown in Fig. \ref{fig:fault_scenario}. 
The initial switch configuration is given in Fig. \ref{fig:fault_scenario}.
The forecast service load is plotted with the solid brown line in Fig. \ref{fig:reorganized_plan_fault}.
The scale factors of vital, semi-vital and non-vital loads in the total service loads are set as $0.3$, $0.5$, $0.2$. The non-vital loads $P_{\rm{no}}(t)$ can be considered as the upper bound of load shedding $P_{\rm{ls}}(t)$ due to the limit in (\ref{eqn:total_load}), which is plotted with the black short dotted line. 

\begin{figure}[htbp]
\centering
\vspace{-1ex}
\includegraphics[width=0.485 \textwidth]{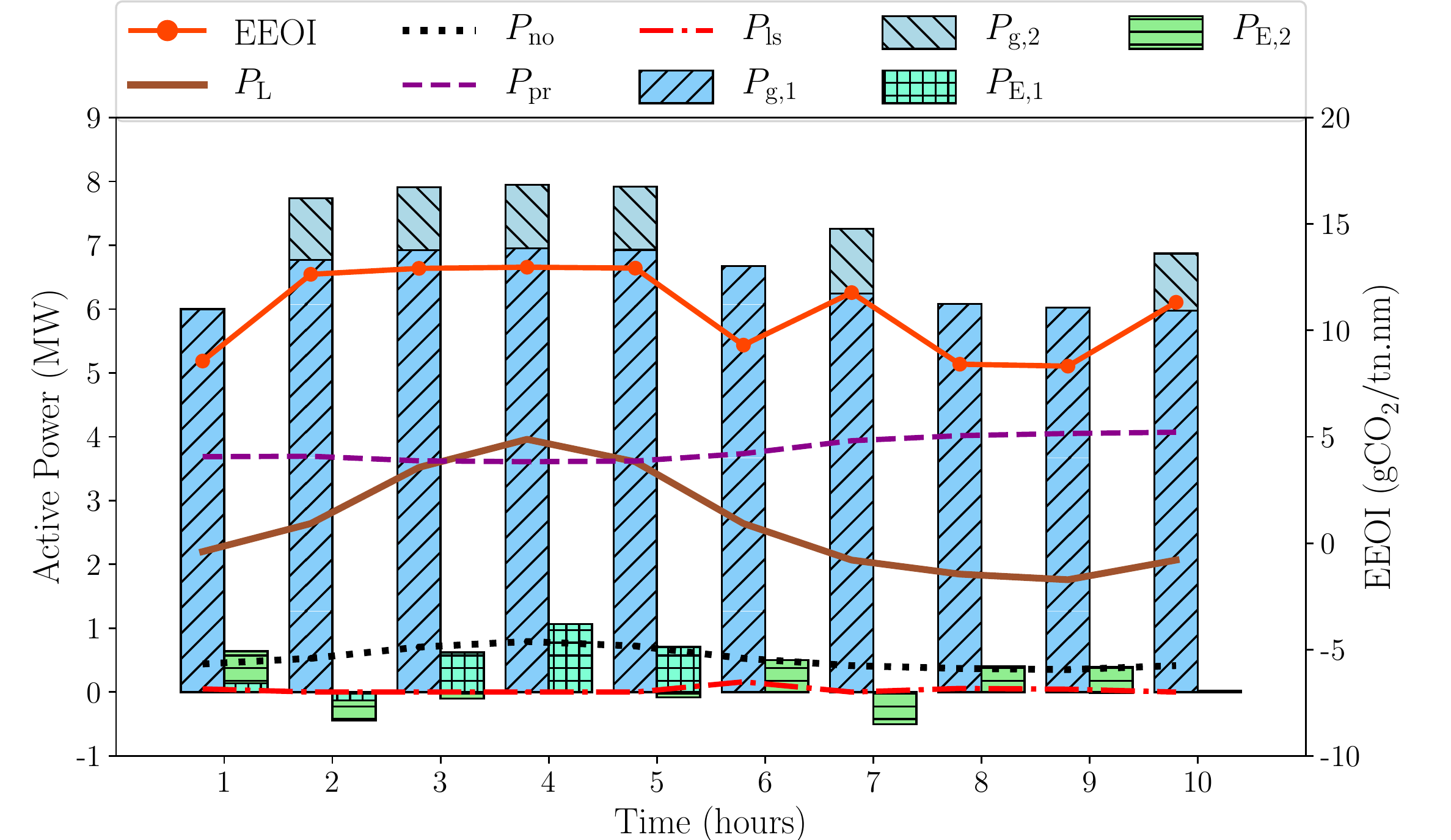}
\caption{  Power schedule with two-side management in semi-island mode. }
\label{fig:reorganized_plan_fault}
\vspace{-2ex}
\end{figure}

\subsubsection{{Performance Analysis of Two-side Management}}
\label{sec:simulation_operation}
The schedule with two-side adjustment methods obtained by the {optimal management algorithm} is shown in Fig. \ref{fig:reorganized_plan_fault}.
The propulsion power is changed at different time, which shows that PPA already works.
{Load shedding works mainly at the 6-th time interval.} The load-shedding amount $P_{\rm{ls}}(t)$ is less than the non-vital load demand $P_{\rm{no}}(t)$.
Thus, it can be observed that PPA and load shedding already work for reducing operating cost and guaranteeing the system safety. {The ESMs in island 2 absorbed power when ATG works at the 2-nd and 7-th time intervals.} {At the 1-st, 6-th, 8-th, and 9-th time intervals, the ESMs in island 2 produced power for loads to avoid mechanical damage of ATG caused by working below $P_{{\rm{g}},2}^{\min}$. $P_{{\rm E}, w}(t)$ is plotted instead of $P_{{\rm e}, n}(t)$ in Fig. \ref{fig:reorganized_plan_fault} to Fig. \ref{fig:power_scheduling_island}. The reason has two aspects. First, $P_{{\rm e}, n}(t)$ does not have a good visibility due to the smaller value. Second, $P_{{\rm e}, n}(t)$ in a island part has a same output power due to the same initial parameters and the quadratic cost function $C_{\rm E}(P_{{\rm e}, n}(t)$.}
%


{The key difference is the coordination mechanism and the feasibility-guaranteed mechanism to reduce the fault effects. Thus, they are verified by the OPMSF problem with different travel distance $D$ compared with the method in [10]. The method in [10] is a two-side management method including GS, ESMC, and PPA.} The results are shown in Table \ref{tab:different_distance}. The total load-shedding amount is defined as $ P_{\rm{LS}} = \sum_{t \in \mathcal{T}} \sum_{w \in \mathcal{W} } \rho_w(t)P_{{\rm{NO}},w}(t)$. 
The number of reconfiguring switches is denoted as $N_{\rm{rs}}$. 
It can be observed that load shedding works and reconfiguration always works in the different distance. Then, based on the results of last three columns, it can be obtained that the maximum travel distance $142.8$nm and the maximum amount of non-vital loads to reduce the fault effects is $5.38$ MW. 
Since it is without load shedding, reconfiguration, and feasibility-guaranteed relaxation, the algorithm in \cite{kanellos2014optimal} cannot solve the problem.
{ The load shedding also works at $D=100$nm, because the ESMC and GS cannot guarantee the safety operation in the second island part where the ATG locates.
Thus, the coordination mechanism and the feasibility-guaranteed mechanism have good effects for the OPMSF problem in semi-island mode.} 

\begin{table}[ht]
\vspace{-4ex}
\centering
\caption{{Two-side management with different distances.}}
\label{tab:different_distance}
\begin{threeparttable}
\begin{tabular}{llllllll}
\hline \hline
\multirow{3}{*}{} & & \multicolumn{6}{c}{Distance $D$ (nm)} \\ \cline{3-7} 
                  & & 100 & 120  &  140 &  160 &  180  \\ \hline
\multirow{1}{*}{{\begin{tabular}[c]{@{}c@{}}Method in \cite{kanellos2014optimal}\end{tabular}} } 
& Cost \tnote{1} &  -\tnote{2} & -  & - & - &  - \\ 
\hline 
\multirow{5}{*}{{\begin{tabular}[c]{@{}c@{}} Optimal manage- \\ ment algorithm \end{tabular}}} & Cost \tnote{1} & 1.09 &  1.29  &  1.66 &  1.66 &  1.66 \\ 
& $P_{\rm{LS}}$ (MW) & 0.31 & 0.31 & 1.70 & 5.38 & 5.38 \\ 
& $N_{\rm{rs}}$ &  5 &  5 &  1  &  1  & 1 \\ 
& $D_{\rm{d}}$ (nm) &  0 &  0 &  0    &  17.2   & 37.2 \\ 
\hline \hline
\end{tabular}
        \footnotesize
        \tnote{1} Cost unit: $10^4$ m.u. \;
        \tnote{2} - represents no feasible solution.
\end{threeparttable}
\vspace{-3ex}
\end{table}

\subsubsection{Performance Analysis of Algorithms}
\label{sec:simulation_LNBD}
To illustrate the performance of the proposed algorithms when the feasibility-guaranteed relaxation works, the test is conducted under $D=160$, different adjustment methods, and different parameter.

\paragraph{Performance under different $\varphi(t)$ }
Firstly, the optimality under different $\varphi(t)$ is tested in failure mode. The results are shown in Table \ref{tab:operating_cost_comparison_with_varphi}. $t_c$ denotes the total computation time. {It can be observed that they has a similar performance when $\varphi(t)$ is at the range of $0.3 \sim 0.8$. $t_c$ can be reduced from 198s to 85.2s.} To adopt ESMC and PPA together, $\varphi(t)$ is set to $0.5$.

\begin{table}[ht] 
\centering
\vspace{-4ex}
\caption{Performance comparison under different $\varphi(t)$.}
\label{tab:operating_cost_comparison_with_varphi}
\begin{threeparttable}
\begin{tabular}{cccccccccc}
\hline \hline
& \multirow{3}{*}{\begin{tabular}[c]{@{}c@{}} \scriptsize Optimal \\ mgmt. \\  algorithm\end{tabular}} & \multicolumn{8}{c}{\multirow{2}{*}{ LNBD with different $\varphi(t)$} }   \\ 
&  & &  &  &  &  &  &  \\ \cline{3-10}
&  
& 0.1 
& 0.2 & 0.3 & 0.5 & 0.6 & 0.7 & 0.8 \\ \hline
Cost \tnote{1}   
&  1.66 & 1.64 & 1.63 & 1.63  & 1.63 & 1.63 & 1.63 & 1.63 \\
$P_{\rm{LS}}$ \tnote{2}  
&  5.38 & 5.38 & 5.38 & 5.38 & 5.38 & 5.38 & 5.38 & 5.38 \\ 
$D_{\rm{d}}$ \tnote{3}  
&  17.2 & 21.2 & 21.1 & 21.0 & 20.9 & 20.9 & 20.9 & 20.8 \\
\multicolumn{1}{c}{\begin{tabular}[c]{@{}l@{}}$t_c$ \tnote{4} \end{tabular}} 
& 198 & 88.6 & 87.5 & 86.0 & 87.1 & 85.2 & 85.6 & 87.3 \\ 
\hline \hline
\end{tabular}
        \footnotesize
        \tnote{1} Cost unit: $10^3$ m.u;
        \tnote{2} Unit: MW;
        \tnote{3} Unit: nm; 
        \tnote{4} Unit: s. 
\end{threeparttable}
\vspace{-2ex}
\end{table}

The main difference of LNBD is that it adopts suboptimal power allocation in (\ref{eqn:P_E_online}) and (\ref{eqn:online_distance}). It can be known that there are little difference of $P_{{\rm{E}},1}(t)$ and $P_{\rm{PR}}(t)$ from the Fig. \ref{fig:pe_ppr_comparison}. $P_{{\rm{E}},2}$ of two algorithms has a large difference since the low load demand affects the generator state. 
\begin{figure}[htbp]
\centering
\vspace{-2ex}
\includegraphics[width=0.48 \textwidth]{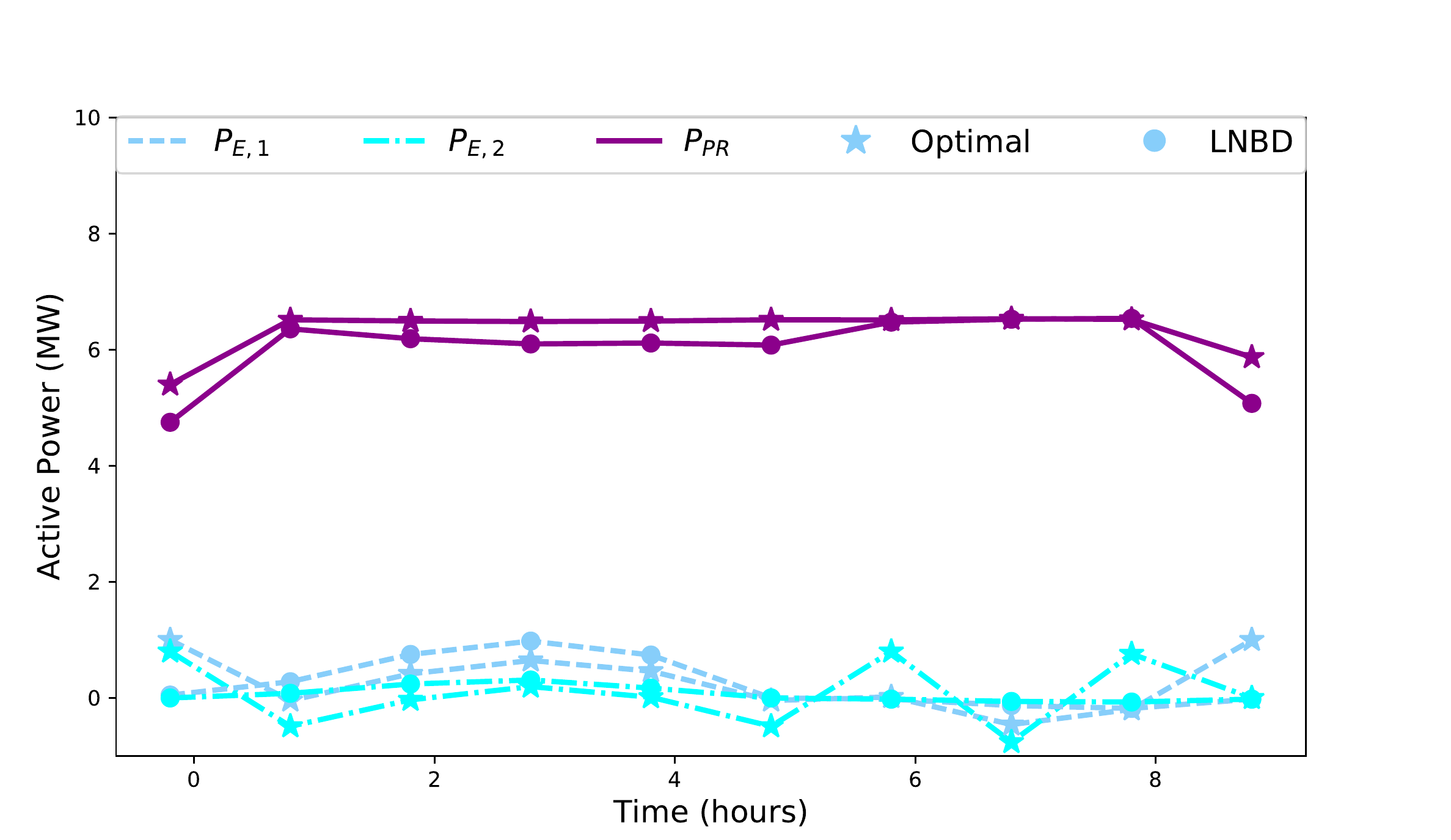}
\caption{ Comparison of $P_{{\rm{E}},w}(t)$ and $P_{\rm{PR}}(t)$ by two algorithms ($\varphi = 0.5$). }
\label{fig:pe_ppr_comparison}
\vspace{-2ex}
\end{figure}

\begin{table}[h] \scriptsize  
\centering
\vspace{-3ex}
\caption{ Performance comparison between different algorithms.}
\label{tab:operating_cost_comparison}
\begin{threeparttable}
\begin{tabular}{cc|ccc|cc}
\hline \hline
\multicolumn{2}{c|}{Status} 
& \multicolumn{3}{c|}{Normal} & \multicolumn{2}{c}{\begin{tabular}[c]{@{}c@{}} Failure  \end{tabular}}\\ \cline{1-7}
\multicolumn{2}{c|}{\begin{tabular}[c]{@{}c@{}} Adjustment \\ methods \end{tabular} } & \begin{tabular}[c]{@{}c@{}}  Only \\ GS \end{tabular} & \begin{tabular}[c]{@{}c@{}} w/ ESMC, \\ GS \& PPA \end{tabular}
& \begin{tabular}[c]{@{}c@{}} Two-side \\ mgmt. \end{tabular}  
& \begin{tabular}[c]{@{}c@{}} w/ ESMC, \\ GS \& PPA  \end{tabular}
&  \begin{tabular}[c]{@{}c@{}} Two-side \\ mgmt. \end{tabular} \\ \hline
\multicolumn{2}{c|}{\begin{tabular}[c]{@{}c@{}} No. Var. (Con- \\ tinuous/Integer) \end{tabular}} 
& \begin{tabular}[c]{@{}c@{}} 60 \\ (20/40) \end{tabular} 
& \begin{tabular}[c]{@{}c@{}} 111 \\ (71/40) \end{tabular}
& \begin{tabular}[c]{@{}c@{}} 121 \\ (81/40) \end{tabular} 
& \begin{tabular}[c]{@{}c@{}} 111 \\ (71/40) \end{tabular}
&  \begin{tabular}[c]{@{}c@{}} 191 \\ (91/100) \end{tabular} \\ \hline
\multirow{6}{*}{\rotatebox{90}{\begin{tabular}[c]{@{}c@{}} Optimal manage- \\ ment algorithm \end{tabular}} } 
& \begin{tabular}[c]{@{}l@{}} Cost \tnote{1}\end{tabular} 
& -  & 1.84  & 1.84  & 1.63  & 1.65   \\ 
& \begin{tabular}[c]{@{}l@{}} Diff. \tnote{2} \end{tabular}
& -  & \bf{0}   & 0   &  -11.4  & -10.3   \\ \cline{2-7}
& \begin{tabular}[c]{@{}l@{}} {\tiny{$P_{\rm{LS}}$}} \tnote{3} \end{tabular} 
& -  & 0  & 7.17  & 0   & 5.38  \\ \cline{2-7}
& \begin{tabular}[c]{@{}l@{}} {\tiny{$D_{\rm{d}}$}} \tnote{4} \end{tabular} 
& -  & 5.43  & 0.98 & 25.3   & 17.2  \\ \cline{2-7}
& \begin{tabular}[c]{@{}l@{}} $ t_c $ \tnote{5} \end{tabular}
& -   & 36.9   & 67.3  & 74.9   &  198 \\
& \begin{tabular}[c]{@{}l@{}} Iter. \end{tabular}
& -   & 18   &  25  & 27   &  109 \\ \hline
\multirow{6}{*}{\rotatebox{90}{\begin{tabular}[c]{@{}c@{}} LNBD \end{tabular}} } 
& \begin{tabular}[c]{@{}l@{}} Cost \end{tabular} 
& -  & 1.71 & 1.70  & 1.59  & 1.57  \\ 
& \begin{tabular}[c]{@{}l@{}} Diff. \end{tabular}
& -  & -7.0  & -7.6  & -13.6 & -14.7   \\ \cline{2-7}
& \begin{tabular}[c]{@{}l@{}} {\tiny{$P_{\rm{LS}}$}} \end{tabular} 
& -  & 0  & 7.02  & 0   & 5.16  \\ \cline{2-7}
& \begin{tabular}[c]{@{}l@{}} {\tiny{$D_{\rm{d}}$}} \end{tabular} 
& -  & 5.51  & 1.05  & 26.1   & 17.9  \\ \cline{2-7}
& \begin{tabular}[c]{@{}l@{}} $ t_c $ \end{tabular}
& - & 83.4   & 87.1   &  86.9  & 85.2  \\
& \begin{tabular}[c]{@{}l@{}} Iter. \end{tabular}
& -   & 5   &  5  & 4   &  6 \\ \hline \hline
\end{tabular}
        \footnotesize
        \tnote{1} unit: $10^4$ m.u., \;
        \tnote{2} unit: \%, \;
        \tnote{3} unit: MW, \;
        \tnote{4} unit: nm, \;
        \tnote{5} unit: s.
\end{threeparttable}
\vspace{-3ex}
\end{table}

\paragraph{Optimality and complexity}
Then, the verification of the optimality and complexity performance is shown in Table \ref{tab:operating_cost_comparison}. For simplicity, we combine PPA with feasibility-guaranteed relaxation which is also named as PPA. Two-side management includes load shedding and reconfiguration besides the former three adjustment methods (GS, ESMC, and PPA). From the third row, it can be observed that the problem in failure mode has more variables than that in normal mode. Because there are the variables of redundant switches in semi-island mode. In the normal mode, there is no feasible solution only with GS because it is without PPA and feasibility-guaranteed relaxation. In the second column of the normal mode, $t_c$ of the {optimal management algorithm} is less than that of LNBD, and the iteration number has an opposite result. Due to time decomposing of LNBD, the subproblem is divided into 10 problems that are solved in sequence. Thus, the computational time at each iteration is larger than that of {optimal management algorithm}. However, the iteration number is related to the complexity of the master problem. After time decomposition, the master problem has a lower complexity. {As load shedding and reduced distance decrease the power demand, the operating cost in failure mode is less than that in normal mode.}
Besides, 
the LNBD can obtain a near-optimal solution that has a similar operating cost. {However, the complexity of LNBD can be reduced significantly in failure mode. The computation time $t_c$ is reduced from $198$s and $85.2$s}.

\subsection{Case Study 2: Island Mode}
The position of island-mode faults is shown in Fig. \ref{fig:fault_scenario}. To verify the algorithms' performance with different configurations, the SPS has one MTG and two ATGs that locate in zone 1, 3, and 6, and three ESMs are in zone 1, 3, and 5. The operation time is set to $T=12$.
The detail parameters are similar to the first one.

\begin{figure}[htbp]
\centering
\vspace{-0ex}
\includegraphics[width=0.48 \textwidth]{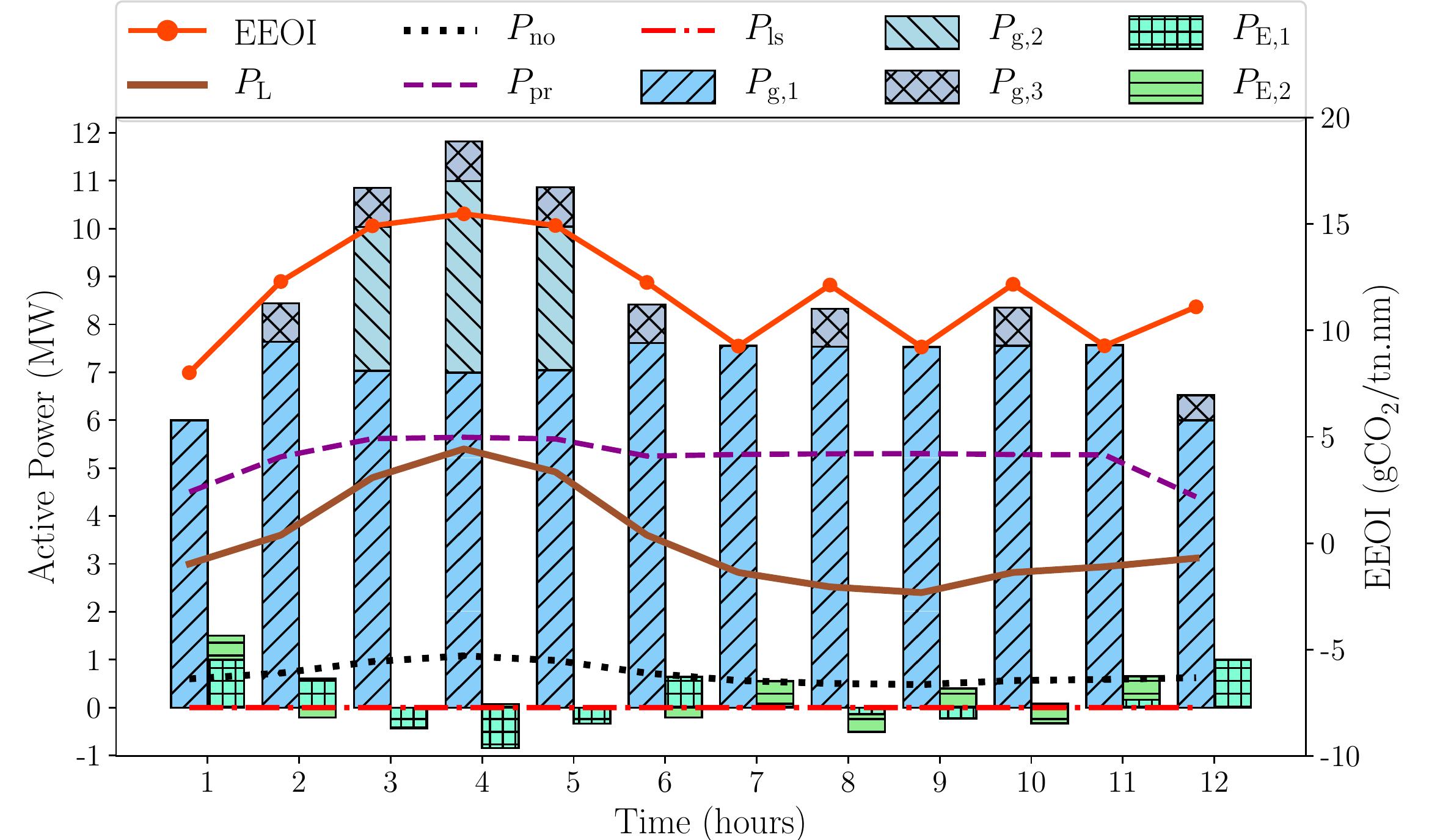}
\caption{  Power schedule with two-side management in island mode.}
\label{fig:power_scheduling_island}
\vspace{-2ex}
\end{figure}

\subsubsection{{Performance Analysis of Two-side Management}}
{The schedule with two-side adjustment methods when $D = 160$nm is shown in Fig. \ref{fig:power_scheduling_island}.
The varying $P_{\rm{PR}}(t)$ verifies that PPA works.
The load-shedding amount $P_{\rm{ls}}$ equal to zero. 
The ATG in zone 3 $(P_{{\rm g}, 2}(t))$ only works at the 3-rd, 4-th and 5-th time intervals due to the high load demand. 
Thus, GS, ESMC, and PPA are enough to finish the voyage distance $D = 160$nm.}

\begin{table}[ht]
\vspace{-2ex}
\centering
\caption{{Two-side management with different distances.}}
\label{tab:different_distance_island}
\begin{threeparttable}
\begin{tabular}{lllllll}
\hline \hline
\multirow{3}{*}{} & & \multicolumn{5}{c}{Distance $D$ (nm)} \\ \cline{3-7} 
                  & &  160 & 180   &  200  & 220 &  240  \\ \hline
\multirow{1}{*}{{\begin{tabular}[c]{@{}c@{}}Method in \cite{kanellos2014optimal}\end{tabular}} } 
& Cost \tnote{1} & 1.92 & -\tnote{2}  & - & - &  -  \\ 
\hline 
\multirow{3}{*}{{\begin{tabular}[c]{@{}c@{}}  Optimal manage- \\  ment algorithm\end{tabular}}} & Cost \tnote{1} & 1.89 & 2.38  &  2.60 &  2.60 &  2.60  \\ 
& $P_{\rm{LS}}$ (MW) & 0 & 0.26 & 6.99 & 6.99 & 6.99  \\
& $D_{\rm{d}}$ (nm) &  0 &  0    &  6.1   & 26.1 &  46.1 \\ \hline \hline
\end{tabular}
        \footnotesize
        \tnote{1} Cost unit: $10^4$ m.u. \;
        \tnote{2} - represents no feasible solution.
\end{threeparttable}
\vspace{-3ex}
\end{table}

{
The proposed coordination mechanism and feasibility-guaranteed mechanism are verified by the OPMSF problem with different travel distance $D$ compared with the method in [10]. The method in [10] is a two-side management method including GS, ESMC, and PPA.
It can be observed that load shedding always works when $D \geqslant 180$nm. Thus, the algorithm in \cite{kanellos2014optimal} cannot obtain a feasible solution when $D \geqslant 180$nm since it does not support load shedding and feasibility-guaranteed mechanism. Then, based on the results of last three columns, it can be obtained that the maximum travel distance $193.9$nm and the maximum amount of non-vital loads to reduce the fault effects is $6.99$ MW. 
Thus, the coordination mechanism and feasibility-guaranteed mechanism are also useful for the OPMSF problem in island mode. }

\subsubsection{Performance Analysis of Algorithms in island mode}

{Here the proposed algorithms are tested under the island mode. To verify the performance, $D=220$, and $\varphi(t) = 0.5$.}

\begin{table}[ht] \scriptsize  
\centering
\vspace{-3ex}
\caption{ Performance comparison between different algorithms in island mode.}
\label{tab:operating_cost_comparison_island}
\begin{threeparttable}
\begin{tabular}{cc|ccc|cc}
\hline \hline
\multicolumn{2}{c|}{Status} 
& \multicolumn{3}{c|}{Normal} & \multicolumn{2}{c}{\begin{tabular}[c]{@{}c@{}} Failure  \end{tabular}}\\ \cline{1-7}
\multicolumn{2}{c|}{\begin{tabular}[c]{@{}c@{}} Adjustment \\ methods \end{tabular} } 
& \begin{tabular}[c]{@{}c@{}}  Only \\ GS \end{tabular} & \begin{tabular}[c]{@{}c@{}} w/ ESMC, \\ GS \& PPA \end{tabular}
& \begin{tabular}[c]{@{}c@{}} Two-side \\ mgmt. \end{tabular}  
& \begin{tabular}[c]{@{}c@{}} w/ ESMC, \\ GS \& PPA  \end{tabular}
&  \begin{tabular}[c]{@{}c@{}} Two-side \\ mgmt. \end{tabular} \\ \hline
\multicolumn{2}{c|}{\begin{tabular}[c]{@{}c@{}} No. Var. (Con- \\ tinuous/Integer) \end{tabular}} 
& \begin{tabular}[c]{@{}c@{}} 108 \\ (36/72) \end{tabular} 
& \begin{tabular}[c]{@{}c@{}} 157 \\ (85/72) \end{tabular}
& \begin{tabular}[c]{@{}c@{}} 169 \\ (97/72) \end{tabular} 
& \begin{tabular}[c]{@{}c@{}} 157 \\ (85/72) \end{tabular}
&  \begin{tabular}[c]{@{}c@{}} 181 \\ (109/72) \end{tabular} \\ \hline
\multirow{6}{*}{\rotatebox{90}{\begin{tabular}[c]{@{}c@{}} Optimal manage-\\  ment algorithm \end{tabular}} } 
& \begin{tabular}[c]{@{}l@{}} Cost\tnote{1} \end{tabular}
& -  & 3.05 & 3.05  & 2.60  & 2.60  \\ 
& \begin{tabular}[c]{@{}l@{}} Diff.\tnote{2} \end{tabular}
& -  & \bf{0}   & 0   & -14.8  & -14.8  \\ \cline{2-7}
& \begin{tabular}[c]{@{}l@{}} {\tiny{$P_{\rm{LS}}$}}\tnote{3} \end{tabular} 
& -  & 0  & 8.39  & 0   & 6.99  \\ \cline{2-7}
& \begin{tabular}[c]{@{}l@{}} {\tiny{$D_{\rm{d}}$}}\tnote{4} \end{tabular} 
& -  & 8.69 & 4.64  & 30.3  & 26.1  \\ \cline{2-7}
& \begin{tabular}[c]{@{}l@{}} $ t_c $\tnote{5} \end{tabular}
& -  & *   &  *  & 125  &  137 \\
& \begin{tabular}[c]{@{}l@{}} Iter. \end{tabular}
& -  & *   &  *  & 87   &  95 \\ \hline
\multirow{6}{*}{\rotatebox{90}{\begin{tabular}[c]{@{}c@{}} LNBD \end{tabular}} } 
& \begin{tabular}[c]{@{}l@{}} Cost \end{tabular}
& -  & 2.95   & 2.97   &  2.69  & 2.70  \\ 
& \begin{tabular}[c]{@{}l@{}} Diff. \end{tabular}
& -  & -3.3  & -2.6  & -11.8  & -11.4   \\ \cline{2-7}
& \begin{tabular}[c]{@{}l@{}} {\tiny{$P_{\rm{LS}}$}} \end{tabular} 
& -  & 0  & 8.02  & 0   & 6.79  \\ \cline{2-7}
& \begin{tabular}[c]{@{}l@{}} {\tiny{$D_{\rm{d}}$}} \end{tabular} 
& -  & 9.17  & 5.20  & 31.5  & 26.1  \\ \cline{2-7}
& \begin{tabular}[c]{@{}l@{}} $ t_c $ \end{tabular}
& -  & 28.6  & 39.3  &  30.9  & 35.6  \\
& \begin{tabular}[c]{@{}l@{}} Iter. \end{tabular}
& -  & 4   &  4  & 5  &  6 \\ \hline \hline
\end{tabular}
        \footnotesize
        \tnote{1} unit: $10^4$ m.u., \;
        \tnote{2} unit: \%, \;
        \tnote{3} unit: MW, \;
        \tnote{4} unit: nm, \;
        \tnote{5} unit: s. \\
        \tnote{6} computation time is more than 1 hours. 
\end{threeparttable}
\vspace{-3ex}
\end{table}

The verification of the optimality and complexity performance is shown in Table \ref{tab:operating_cost_comparison_island}. 
From the third row, it can be observed that the OPMSF problem has more variables than that in normal mode. However, the increased variable is less than that in semi-island mode, because there are not redundant switches in island mode. $t_c$ of the {optimal management algorithm} in normal mode is more than one hour due to the high complexity of the master problem ($2^{72}$).
Different with Table \ref{tab:operating_cost_comparison}, although there are more continuous variables in failure mode, $t_c$ of the {optimal management algorithm} in failure mode is less than that in normal mode. Because the original problem is divided into two independent smaller problems and there are no coupled redundant switches between two island parts. LNBD in island mode has a faster computational time and a less iteration. Because in this case, the problem in normal mode is the more complex one. {In this scenario, load shedding reduces the demand of service loads, and reduced distance $D_{\rm d}$ decreases the demand of propulsion modules. Thus, the operating cost in failure mode is less than that in normal mode.}

Due to the sub-optimal power allocation in (\ref{eqn:P_E_online}) and (\ref{eqn:online_distance}), LNBD does not suit for the scenarios that the generator state has a significant change, and then $P_{{\rm{E}},w}(t)$ and $P_{\rm{PR}}(t)$ have a big difference with the change rule of service loads. 
$D_{\rm{d}}$ with same adjustment methods in failure mode is always larger than that in normal mode. It shows that the faults in above scenarios affect power supply-demand relationship and then reduce the maximum travel distance. It also verifies the necessity of post-fault management of SPS.


\section{Conclusion}
\label{sec:conclusion}
{In this paper, the OPMSF problem in a mid-time scale was investigated.
Firstly, a coordination mechanism was developed to make load shedding collaboratively work with GS and ESMC. In this mechanism, 
a sufficient condition of the penalty parameter in the load shedding term of the objective was derived to guarantee that load shedding only works when GS and ESMC cannot solve the OPMSF problem.
Then, considering the infeasible scenarios caused by faults, a feasibility-guarantee mechanism was established by adding a penalty term in the objective and deriving a sufficient condition of its penalty parameter. The  mechanism was to guarantee that if the original problem is feasible, the reformulated one has the same optimal solution; if not, the maximum travel distance can be obtained to assist rescue mission.
Finally, an optimal management algorithm based on BD and LNBD were designed to solve the reformulated problem. A complexity analysis was given to compare their performance.
The simulation demonstrated the effectivity of the mechanisms and algorithms, and the optimal management algorithm is suitable for solving the small-scale OPMS/OPMSF problem while LNBD for the large-scale one. 

The previous works and this paper all focused on the optimization problem at a fixed operation time and a determined route. However, the operation time and route have a great effect on the performance. Thus, how to construct the problem with variable route and operation time and design the effective algorithm to solve it requests further investigation.
}


{
\begin{appendices}

\section{Proof of Proposition 1}
\label{append:theorem1}
To meet the load demand before load shedding, the cost of load shedding $\Delta P_{\rm{no}}(t)$ has to be larger than the corresponding reduced cost of generation power:
\begin{align}
\label{eqn:load_shedding_proof1} \xi_{\rm{l}} \Delta P_{\rm{no}}(t) & = \xi_{\rm{l}} C_L ({\hat \rho}(t)) - \xi_{\rm{l}} C_L ({\rho}(t)) \nonumber \\
& > C_{\rm{G}} (\hat P_{{\rm{g}},m}(t) ) - C_{\rm{G}} (P_{{\rm{g}},m}(t) ), \\
\label{eqn:load_shedding_proof2} \xi_{\rm{l}} \Delta P_{\rm{no}}(t) & > \xi_{\rm{e}} (C_E (\hat P_{{\rm{e}}, n}(t) ) - C_E (P_{{\rm{e}}, n}(t) )).
\end{align}
where $\hat P_{\cdot}(t)$ and $P_{\cdot}(t)$ denote two different arbitrary output power at time $t$.

The (\ref{eqn:load_shedding_proof1}) and (\ref{eqn:load_shedding_proof2}) can be simplified as:
\begin{align}
\label{eqn:xi_form1} \xi_{\rm{l}}  & > \dfrac{a_{{\rm{g}},m}(\hat P_{{\rm{g}},m}(t) + P_{{\rm{g}},m}(t) ) \Delta P_{\rm{no}}(t) + b_{{\rm{g}},m} \Delta P_{\rm{no}}(t) }{\Delta P_{\rm{no}}(t)} \nonumber \\
& = a_{{\rm{g}},m}(\hat P_{{\rm{g}},m}(t) + P_{{\rm{g}},m}(t) ) + b_{{\rm{g}},m}, \\
\label{eqn:xi_form2} \xi_{\rm{l}}  & > \xi_{\rm{e}} a_{\rm{lc}} \dfrac{(\hat P_{{\rm{e}}, n}(t) + P_{{\rm{e}}, n}(t) ) \Delta P_{\rm{no}}(t)}{\Delta P_{\rm{no}}(t)} \nonumber \\
& = \xi_{\rm{e}} a_{\rm{lc}} (\hat P_{{\rm{e}}, n}(t) + P_{{\rm{e}}, n}(t) ).
\end{align}

The right terms of (\ref{eqn:xi_form1}) and (\ref{eqn:xi_form2}) can be simplified as:
\begin{align}
\label{eqn:xi_simp_form1} & a_{{\rm{g}},m}(\hat P_{{\rm{g}},m}(t) + P_{{\rm{g}},m}(t) ) + b_{{\rm{g}},m} 
< 2 a_{g} P_{g}^{\max} + b_{g}, \\
\label{eqn:xi_simp_form2} & \xi_{\rm{e}} a_{\rm{lc}} (\hat P_{{\rm{e}}, n}(t) + P_{{\rm{e}}, n}(t) )
< 2 \xi_{\rm{e}} a_{\rm{lc}} P_{\rm{e}}^{\max},
\end{align}
where $ P_{g}^{\max} = \underset{m} \max \left( P_{{\rm{g}},m}^{\max} \right), a_{g} = \underset{m} \max \left( a_{{\rm{g}},m} \right), b_{g} = \underset{m} \max \left( b_{{\rm{g}},m} \right), m \in \mathcal{M}$, $P_{\rm{e}}^{\max} = \underset{m} \max \left( P_{{\rm{e}}, n}^{\max} \right), n \in \mathcal{N} $.

Combining (\ref{eqn:xi_form1})-(\ref{eqn:xi_simp_form2}), we have that if $\xi_{\rm{l}}$ satisfies:
\begin{align}
\label{eqn:xi_simp_form3} \xi_{\rm{l}} & > 2 a_{g} P_{g}^{\max} + b_{g}, \\
\label{eqn:xi_simp_form4} \xi_{\rm{l}} & > 2 \xi_{\rm{e}} a_{\rm{lc}} P_{\rm{e}}^{\max},
\end{align}
the Eqs. (\ref{eqn:load_shedding_proof1}) and (\ref{eqn:load_shedding_proof2}) can be hold. They can be simplified as:
\begin{equation}
\label{eqn:xi_simp_form5} \xi_{\rm{l}} >  \max \{ 2 a_{g} P_{g}^{\max} + b_{g}, 2 \xi_{\rm{e}} a_{\rm{lc}} P_{\rm{e}}^{\max} \}.
\end{equation}

\section{Proof of Theorem 1}
\label{append:theorem2}
In order to prove that the relaxation is exact, it is necessary to show that any optimal solution of \textbf{P3} has equality in (\ref{eqn:distance_ppr_relax}). 
One optimal solution is denoted by ${\bm u}_w^*(t)$ which is expressed as 
\begin{align*}
{\bm u}_w^*(t)  {=} \big( & \bm \delta_{{\rm{g}},m}^*(t), \bm y_{{\rm{g}},m}^{*}(t), \bm{S}_{{\rm{P}}, x}^*, \bm{S}_{{\rm{S}}, y}^*, \bm P_{{\rm{g}},m}^*(t), \bm P_{{\rm{e}}, n}^*(t),  \\
& \bm P_{{\rm{pr}},r}^*(t), \rho_w^{*}(t) \big).
\end{align*}
For the sake of contradiction, we assume that ${\bm u}_w^*(t)$ has strict inequality, i.e.,
\begin{equation*}
\begin{aligned}
\label{eqn:distance_ppr_inequal}
& {D} - \sum\nolimits_{t \in \mathcal{T}} ( P_{{\rm{PR}},w}^*{(t)} / \alpha )^{1/\beta} \Delta t < 0.
\end{aligned}
\end{equation*}

Then, another solution $\tilde{\bm u}_w(t)$ is considered, which is defined by:
\begin{align*}
 & \tilde{\bm \delta}_{{\rm{g}},m}(t) =  \bm \delta_{{\rm{g}},m}^*(t), \tilde{\bm y}_{{\rm{g}},m}(t) = \bm y_{{\rm{g}},m}^{*}(t), \ \tilde{\bm{S}}_{{\rm{P}}, x} = \bm{S}_{{\rm{P}}, x}^*, \\
 & \tilde{\bm{S}}_{{\rm{S}}, y} = \bm{S}_{{\rm{S}}, y}^*, \tilde{\bm P}_{{\rm{g}},m}(t) = \bm P_{{\rm{g}},m}^*(t) - \varepsilon, \ \tilde{\bm P}_{{\rm{e}}, n}(t) = \bm P_{{\rm{e}}, n}^*(t), \\
 & \tilde{\bm P}_{{\rm{pr}},r}(t) = \bm P_{{\rm{pr}},r}^*(t) - \varepsilon, \tilde{\rho}_w (t) = \rho_w^{*}(t),
\end{align*}
where it satisfies that $ 0 < \varepsilon \leqslant P_{{\rm{pr}},r}^*(t) - \alpha(D/T \Delta t)^{1 / \beta} $.
It can be verified that $\tilde{\bm u}_w(t)$ satisfies all the constraints of \textbf{P3}, thus it is a feasible point. However, since $ \tilde{\bm P}_{{\rm{g}},m}(t) = \bm P_{{\rm{g}},m}(t)^* - \varepsilon $, the objective value of $\tilde{\bm u}_{w}(t)$ is strictly smaller than of ${\bm u}_{w}(t)^*$. This contradicts the assumption that ${\bm u}_{w}(t)^*$ is the optimal solution.

\section{Proof of Proposition 2}
\label{append:theorem3}
To guarantee that the optimal solution of \textbf{P4} is also the optimal solution of \textbf{P3} if \textbf{P3} has feasible solutions,
the increased cost $h D_{\rm{d}}$ must be greater than the decreased cost $\Delta C(t)$ of generation power caused by the reduced distance, which can
\begin{equation}
\begin{aligned}
\label{eqn:optimal_condition}
h D_{\rm{d}} & > \Delta C(t). \\
\end{aligned}
\end{equation}

According to (\ref{eqn:distance_ppr}) and Theorem \ref{th:theorem 2}, the left term of (\ref{eqn:optimal_condition}) can be transformed into
\begin{equation}
\begin{aligned}
\label{eqn:dist_cost_relaxation_1}
h D_{\rm{d}} = & h \sum_{t \in \mathcal{T} } \left( \hat P_{\rm{PR}}(t) /\alpha \right)^{\frac{1}{\beta}} \Delta t - h \sum_{t \in \mathcal{T} } \left( P_{\rm{PR}}(t)/\alpha \right)^{\frac{1}{\beta}} \Delta t, \\ 
\end{aligned}
\end{equation}
where $\hat P_{\rm{PR}}(t)$ and $ P_{\rm{PR}}(t)$ are the original and reduced power of propulsion modules at time $t$, respectively.

Then, based on Lagrange mean value theorem, it has
\begin{equation}
\begin{aligned}
\label{eqn:dist_cost_relaxation_2}
& \sum_{t \in \mathcal{T} } \left( \hat P_{\rm{PR}}(t) /\alpha \right)^{\frac{1}{\beta}} \Delta t - \sum_{t \in \mathcal{T} } \left( P_{\rm{PR}}(t)/\alpha \right)^{\frac{1}{\beta}} \Delta t \\ 
= & \dfrac{1}{\beta \alpha^{\frac{1}{\beta}}} \left( P_{\rm{PR}}^{'}(t)\right)^{\frac{1}{\beta}-1} \sum_{t \in \mathcal{T} } \left(\hat P_{\rm{PR}}(t) - P_{\rm{PR}}(t) \right) \Delta t, \\
\end{aligned}
\end{equation}
where $P_{\rm{PR}}^{'}(t) \in ( P_{\rm{PR}}(t), \hat P_{\rm{PR}}(t))$.

Consider that $\beta > 1$, there hold 
\begin{equation}
\label{eqn:moni}
( P_{\rm{PR}}^{'}(t))^{\frac{1}{\beta}-1} > ( P_{\rm{PR}}^{\max} )^{\frac{1}{\beta}-1} .
\end{equation}

Combing (\ref{eqn:dist_cost_relaxation_2}) and (\ref{eqn:moni}), (\ref{eqn:dist_cost_relaxation_1}) can be transformed into 
\begin{equation}
\begin{aligned}
\label{eqn:dist_cost_relaxation}
& h D_{\rm{d}} > \dfrac{h}{\beta \alpha^{1/\beta}} ( P_{\rm{PR}}^{\max} )^{\frac{1}{\beta}-1} \sum_{t \in \mathcal{T} }  (\hat P_{\rm{PR}}(t) - P_{\rm{PR}}(t)) \Delta t \\ 
= & \dfrac{h}{\beta \alpha^{1/\beta}} ( P_{\rm{PR}}^{\max} )^{\frac{1}{\beta}-1} \sum_{t \in \mathcal{T} } \big( \hat P_{\rm{G}}(t) + \hat P_{\rm{E}}(t) + \hat {\rho}(t) P_{\rm{no}}(t) \bigg.\\
& \bigg. - P_{\rm{G}}(t) -  P_{\rm{E}}(t) - {\rho}(t) P_{\rm{no}}(t) \big) \Delta t \\
= & \dfrac{h}{\beta \alpha^{1/\beta}} ( P_{\rm{PR}}^{\max} )^{\frac{1}{\beta}-1} \sum_{t \in \mathcal{T} } (\Delta P_{\rm{G}}(t) + \Delta P_{\rm{E}}(t) + \Delta P_{\rm{no}}(t)) \Delta t,
\end{aligned}
\end{equation}
where $\hat P_{\rm{PR}}(t)$ and $ P_{\rm{PR}}(t)$ are the reduced power of generators and ESMs, respectively.


At a deviation $\Delta P_{\rm{G}}(t)$, $\Delta P_{\rm{E}}(t)$, and $\Delta P_{\rm{no}}(t)$, the change of $\delta_{{\rm{g}},m}(t)$ gives a additional chance to reduce the operating cost. Hence, the best solution $\bm u_{w}(t)$ with a changed $\delta_{{\rm{g}},m}(t)$ has a lower operating cost than the one with same $\delta_{{\rm{g}},m}(t)$. Based on this, it can obtain that
\begin{equation*}
\begin{aligned}
\label{eqn:gen_cost_relaxation_1}
 \Delta C(t)
& \leqslant \sum_{t \in \mathcal{T}} \Big( \sum_{m \in \mathcal{M}} C_{\rm{G}}(\hat P_{{\rm{g}},m}(t) ) - \sum_{m \in \mathcal{M}} C_{\rm{G}}( P_{{\rm{g}},m}(t) ) \Big) \Delta t \\
& + \xi_{\rm{e}} \sum_{t \in \mathcal{T}} \Big( \sum_{n \in \mathcal{N}} C_{\rm{E}}( \hat P_{{\rm{e}}, n}(t) ) - \sum_{n \in \mathcal{N}} C_{\rm{E}}( P_{{\rm{e}}, n}({\rho}t) ) \Big) \Delta t \\
& + \xi_{\rm{l}} \sum_{t \in \mathcal{T}} \Big( C_{\rm{L}}( \hat {\rho}(t) ) -  C_{\rm{L}}( {\rho}(t) ) \Big) \Delta t. 
\end{aligned}
\end{equation*}

For simplicity, the above inequation can be transformed into
\begin{equation}
\begin{aligned}
\label{eqn:gen_cost_relaxation_2}
\Delta C(t)
< & \sum_{t \in \mathcal{T}} \sum_{m \in \mathcal{M}} (2 a_{{\rm{g}},m}P_{{\rm{g}},m}^{\max} + b_{{\rm{g}},m})  \Delta P_{{\rm{g}},m}(t) \Delta t\\
+ & 2 a_{\rm{lc}} \xi_{\rm{e}} \sum_{t \in \mathcal{T}} \sum_{n \in \mathcal{N}} P_{\rm{e}}^{\max}  \Delta P_{{\rm{e}}, n}(t) \Delta t \\
+ & \xi_{\rm{l}} \sum_{t \in \mathcal{T}} \sum_{n \in \mathcal{N}}  \Delta P_{\rm{no}}(t) \Delta t.
\end{aligned}
\end{equation}

Considering that $a_{\rm{lc}}$ in life-cycle cost function is a small constant, it has $(2 a_{{\rm{g}},m}P_{{\rm{g}},m}^{\max} + b_{{\rm{g}},m}) < (2 a_{g} P_{g}^{\max} + b_{g}) < \xi_{\rm{l}}$ and $\xi_{\rm{l}} > 2 \xi_{\rm{e}} a_{\rm{lc}} P_{\rm{e}}^{\max} $. Hence, there holds 
\begin{equation}
\begin{aligned}
\label{eqn:gen_cost_relaxation_3}
\Delta C(t)< & (2 a_{g} P_{g}^{\max} + b_{g}) \sum_{t \in \mathcal{T}} \sum_{m \in \mathcal{M}}  \Delta P_{{\rm{g}},m}(t) \Delta t \\
+ & 2 \xi_{\rm{e}} a_{\rm{lc}} P_{\rm{e}}^{\max} \sum_{t \in \mathcal{T}} \sum_{n \in \mathcal{N}}  \Delta P_{{\rm{e}}, n}(t) \Delta t \\
+ & \xi_{\rm{l}} \sum_{t \in \mathcal{T}}  \Delta P_{\rm{no}}(t) \Delta t \\
< & \xi_{\rm{l}} \sum_{t \in \mathcal{T}} \left( \Delta P_{\rm{G}}(t) + \Delta P_{\rm{E}}(t) + \Delta P_{\rm{no}}(t) \right) \Delta t.
\end{aligned}
\end{equation}

Thus, combing (\ref{eqn:dist_cost_relaxation}) and (\ref{eqn:gen_cost_relaxation_3}), it has that 
\begin{equation}
\begin{aligned}
\label{eqn:optimal_condition_h}
h   > \dfrac {\beta \alpha^{\frac{1}{\beta}} \xi_{\rm{l}} } {\left( P_{\rm{PR}}^{\max} \right)^{\frac{1}{\beta}-1} }.
\end{aligned}
\end{equation}

Therefore, if $h$ satisfies (\ref{eqn:optimal_condition_h}), there hold (\ref{eqn:optimal_condition}).
Consequently, the optimal solution of \textbf{P4} is also the optimal solution of \textbf{P3} if \textbf{P3} has feasible solutions and $h$ satisfies (\ref{eqn:optimal_condition_h}).

If \textbf{P3} has no feasible solution, $D_{\rm{d}}$ in any feasible solution must be greater than zero. (\ref{eqn:relaxed_cons_P5}) can rewritten as 
\begin{subequations}
\begin{align}
& {D}-D_{\rm{d}} - \sum\nolimits_{t \in \mathcal{T}} ( P_{{\rm{PR}},w}{(t)} / \alpha )^{1/\beta} \Delta t \\
\label{eqn:proof_cons_P5}
= & {D^{-}} - \sum\nolimits_{t \in \mathcal{T}} ( P_{{\rm{PR}},w}{(t)} / \alpha )^{1/\beta} \Delta t \leqslant 0.
\end{align}
\end{subequations}

Since there is no $D_{\rm{d}}$ in \textbf{P4} that replaced (\ref{eqn:relaxed_cons_P5}) with (\ref{eqn:proof_cons_P5}), $h D_{\rm{d}}$ in the objective of \textbf{P4} can be removed. Hence, this reformulated problem is similar with \textbf{P3}. The optimal solution $\bm u(t)^*$ of \textbf{P4} is also the optimal solution of \textbf{P3} that replaced (\ref{eqn:relaxed_cons_P5}) with (\ref{eqn:proof_cons_P5}). Based on the \textbf{Theorem \ref{th:theorem 2}}, it can be known that the maximum travel distance that can be achieved in time $T$ is $D^- = D- D_{\rm{d}}^*$.
\end{appendices}
}

{
\section*{Acknowledgment}
The authors would like to thank the anonymous reviewers for their professional and valuable comments that have led to the improved version.
}

%

\end{document}